\begin{document}

\title{Reoptimization of Path Vertex Cover Problem}

\author{Mehul Kumar \and Amit Kumar\textsuperscript{(\Letter)} \and C. Pandu Rangan}

\authorrunning{M. Kumar et al.}

\institute{Department of Computer Science and Engineering,\\
Indian Institute of Technology Madras, Chennai, India\\
\email{\{mehul,amitkr,rangan\}@cse.iitm.ac.in}}

\maketitle             

\begin{abstract}
Most optimization problems are notoriously hard. Considerable efforts must be spent in obtaining an optimal solution to certain instances that we encounter in the real world scenarios. Often it turns out that input instances get modified locally in some small ways due to changes in the application world. The natural question here is, given an optimal solution for an old instance $I_O$, can we construct an optimal solution for the new instance $I_N$, where $I_N$ is the instance $I_O$ with some local modifications. Reoptimization of NP-hard optimization problem precisely addresses this concern. It turns out that for some reoptimization versions of the NP-hard problems, we may only hope to obtain an approximate solution to a new instance. In this paper, we specifically address the reoptimization of path vertex cover problem. The objective in $k$-$path$ vertex cover problem is to compute a minimum subset $S$ of the vertices in a graph $G$ such that after removal of $S$ from $G$ there is no path with $k$ vertices in the graph. We show that when a constant number of vertices are inserted, reoptimizing unweighted $k$-$path$ vertex cover problem admits a PTAS. For weighted $3$-$path$ vertex cover problem, we show that when a constant number of vertices are inserted, the reoptimization algorithm achieves an approximation factor of $1.5$, hence an improvement from known $2$-approximation algorithm for the optimization version. We provide reoptimization algorithm for weighted $k$-$path$ vertex cover problem $(k \geq 4)$ on bounded degree graphs, which is also an NP-hard problem. Given a $\rho$-approximation algorithm for $k$-$path$ vertex cover problem on bounded degree graphs, we show that it can be reoptimized within an approximation factor of $(2-\frac{1}{\rho})$ under constant number of vertex insertions.
\keywords{Reoptimization \and Approximation algorithms \and Path vertex cover }
\end{abstract}

\section*{{\large{1.1}} \ Introduction}
Most combinatorial optimization problems are NP-hard. Efficient algorithms to find an optimal solution for such problems are not known. By efficient, we mean running in time polynomial in the input size. Hence, we resort to approximation algorithms which aim to efficiently provide a near-optimal solution. For minimization problems, a $\rho$-approximation algorithm ($\rho >1$) efficiently outputs a solution of cost at most $\rho$ times the optimum, where $\rho$ is called the approximation ratio. A family of $(1+\epsilon)$ approximation algorithms ($\forall \epsilon >0)$ with polynomial running times is called a polynomial time approximation scheme (PTAS).

In many practical applications, the problem instance can arise from small perturbations in the previous instance of an optimization problem. A naive approach is to work on the new problem instance from scratch using known $\rho$-approximation algorithm. But, with some prior knowledge of the solution for old instance, can we perform better? The computational paradigm of reoptimization addresses this question.

We consider the case where one has devoted a substantial amount of time to obtain an exact solution for the NP-hard optimization problem. Now, the goal is to reoptimize the solution whenever the modified instance is known. A reoptimization problem $Reopt(\pi)$ can be built over any optimization problem $\pi$. An input instance for $Reopt(\pi)$ is a triple $(I_N,I_O,OPT(I_O))$, where $I_O$ is an old instance, $I_N$ is a modified instance and $OPT(I_O)$ is an optimal solution for $\pi$ on $I_O$.

Suppose $I_N$ is a hard instance obtained via some perturbations in $I_O$ and assume that we have an optimal solution of $I_O$. The natural question is, can we find an optimal solution of $I_N$? In general this may not be the case and we specifically show in Lemma 3 that, if the optimization problem is $path \; vertex \; cover$ and the perturbation is a single vertex insertion, then even possessing $OPT(I_O)$ does not help to find an optimal solution for $I_N$ efficiently, unless $P=NP$. Hence, the objective of an efficient algorithm for $Reopt(\pi)$ is to either achieve a better approximation ratio or improve the running time of the known approximation algorithm. In this paper, the optimization problem we consider for reoptimization is the path vertex cover problem. This problem has its applications in traffic control and secure communication in wireless networks \cite{DBLP:conf/wistp/Novotny10}. We briefly explain the optimization problem below:

A path of order $k$ in a graph is a simple path containing $k$ vertices. For a given graph $G=(V,E)$, $S \subseteq V$ is a feasible $k$-$path$ vertex cover iff every path of order k in $G$ contains at least one vertex from $S$. The problem of finding a feasible k-path vertex cover on a graph is known as $k$-path vertex cover problem ($k$-$PVCP$). This problem has two variants: weighted and unweighted. The goal in unweighted $k$-$PVCP$ is to find a feasible subset of minimum cardinality whereas in weighted $k$-$PVCP$, the objective is to find minimum weighted subset of vertices that covers all the paths of order $k$ or more.

\section*{{\large{1.2}} \ Related Work and Contributions}
For any fixed integer $k \geq 2$, the $k$-$path$ vertex cover problem ($k$-$PVCP$) is known to be NP-complete for an arbitrary graph $G$ and also it's NP-hard to approximate it within a factor of $1.3606$, unless P=NP \cite{DBLP:journals/dam/BresarKKS11}. However, unweighted and weighted $k$-$path$ vertex cover problems on trees have polynomial time algorithms \cite{DBLP:journals/dam/BresarKKS11} \cite{DBLP:journals/dam/BresarKSS14}. The problem has been studied in \cite{DBLP:conf/soda/Lee17} as $k$-$path$ traversal problem which presents a $\log(k)$-approximation algorithm for the unweighted version. For $k=2$, the $k$-$PVCP$ corresponds to the conventional vertex cover problem. The $3$-$PVCP$ is a dual problem to the dissociation number of the graph. Dissociation number is the maximum cardinality of a subset of vertices that induce a subgraph with maximum degree at most 1. \cite{DBLP:journals/tcs/TuZ11} provides a $2$-approximation algorithm for weighted $3$-$PVCP$ and there is a $3$-approximation algorithm for $4$-$PVCP$ \cite{camby2014primal}. 

For the reoptimization version, G. Ausiello et al. present an algorithm for reoptimizing unweighted vertex cover problem. Following the approach in \cite{article}, section 2 shows that reoptimization of unweighted $k$-$PVCP$ admits a PTAS under the constant number of vertex insertions. In section 3.1, we extend the reoptimization paradigm for weighted vertex cover problem in \cite{article} to weighted $k$-$PVCP$. As a use case for the subroutine in section 3.1, we show in section 3.2 that weighted $3$-$PVCP$ can be reoptimized with an approximation factor of 1.5 under constant number of vertex insertions. In section 3.3, we present an algorithm for reoptimization version of weighted $k$-$PVCP$ $(k \geq 4)$ on bounded degree graphs under constant number of vertex insertions. For a given $\rho$-approximation algorithm for weighted $k$-$PVCP$ $(k \geq 4)$, this algorithm achieves an approximation ratio of $(2-\frac{1}{\rho})$  for such graphs. In Appendix A, we present the hardness results. In Appendix B, we present $n$ and $k$-approximation algorithms that are used in our algorithm for weighted $k$-$PVCP$ $(k \geq 5)$.

\section*{{\large{1.3}} \ Preliminaries}
In this paper, the graphs we consider are simple undirected graphs. A graph $G$ is a pair of sets $(V, E)$, where $V$ is the set of vertices and $E$ is the set of edges formed by unordered pairs of distinct vertices in $V$. For a vertex $v \in V$, we denote the set of neighbours of $v$ in $G$ by $N_G(v)$, where $N_G(v) = \{u \in V \mid (u,v) \in E\}$. For any $S \subseteq V$, we define $N_G(S)$ to be the neighbouring set of S in $G$, where $N_G(S) \subseteq (V-S)$ and $\forall u \in N_G(S) \; \exists v \in S$ such that $(u,v) \in E$. For any $S \subseteq V$, we use $G[S]$ to represent the subgraph induced on the vertex set $S$ in $G$. Let $V(G)$ and $E(G)$ denote the vertex set and edge set of $G$ respectively. A degree of a vertex is the number of edges incident on it. We use $\Delta(G)$ to denote the maximum degree of the vertices in graph $G$. In the case of weighted graphs, with every vertex we associate a positive weight function $f:V\rightarrow \mathbb{R}^+$. For any $v\in V$, let $w(v)$ be the weight of the vertex and for any subset $S\subseteq V$, the weight of the subset $w(S)$ is $\sum_{v\in S} w(v)$. Size of a graph is defined as the number of vertices in it. A constant-size graph is a graph where number of vertices are constant and independent of input parameters of the algorithm. Two graphs are said to be disjoint if they do not share any common vertices.

Let $G=(V,E)$ and $G_A = (V_A,E_A)$ be two graphs where $V\cap V_A = \phi$. Given a set of attachment edges $E^a \subseteq (V \times V_A)$, insertion of $G_A$ into $G$ yields the undirected graph $G' = (V', E')$, where $V' = V \cup V_A$ and $E' = E \cup E_A \cup E^a$. Thus, a constant number of vertex insertions can be realized as a constant-size graph insertion. We define a vertex insertion in $G$ as a special case of graph insertion where the inserted graph $G_A$ is a single vertex $v \notin V[G]$. In general, we denote $OPT(G)$ as the optimal solution and $ALG(G)$ as the solution output by an algorithm for the corresponding problem on $G$.

Let $\pi$ denote the optimization problem and $Reopt(\pi)$ is the reoptimization version of it. The $\pi$ we consider in this paper is the $k$-$path$ vertex cover problem. The input instance of $Reopt(\pi)$ is a triple $(G_O, G_N, OPT(G_O))$, where $G_O$ is the old graph, $G_N$ is the new graph and $OPT(G_O)$ is an optimal solution for $\pi$ on $G_O$. Let $A_\rho(\pi)$ be a known $\rho$-approximation algorithm  for $\pi$.
For the algorithms we give, the equality statements are considered as assignment from right to left.

\section*{{\large{2}} \ Reoptimization of unweighted $k$-$PVCP$}
Let $\pi$ be unweighted $k$-$PVCP$. We consider the reoptimization version $Reopt(\pi)$ where a constant-size graph $G_A=(V_A,E_A)$ is inserted to the old graph $G_O = (V_O, E_O)$ to yield the new graph $G_N = (V_N, E_N)$. Let $|V_A|=c$. For a given $\epsilon$, we design an algorithm $Unwtd$-$kpath$ for $Reopt(\pi)$ that outputs $ALG(G_N)$ as a solution.

\begin{algorithm}
\caption{$Unwtd$-$kpath(G_O, G_N, OPT(G_O), \epsilon)$}\label{alg:Unwtd$-$kpath}
\begin{algorithmic}[1]
\State $V_A = V(G_N) - V(G_O)$
\State $c = |V_A|$
\State $m = \lceil c/\epsilon \rceil$
\State \texttt{$S_1 = V(G_N)$}
\For{\texttt{each subset $X$ of $V(G_N)$ where $|X| \leq m$}}
\State \textbf{if}\texttt{($X$ covers all $k$-$paths$ in $G_N$ and $|X| < |S_1|$)}
\State \texttt{$\;\;\;\;S_1 = X$}
\EndFor
\State \texttt{$S_2 = OPT(G_O) \cup V_A$}
\State \texttt{$ALG(G_N) = min(|S_1|, |S_2|)$}

\State \textbf{return} $ALG(G_N)$
\end{algorithmic}
\end{algorithm}

\begin{theorem}
$Unwtd$-$kpath$ for $Reopt(\pi)$ under constant-size graph insertion admits a PTAS.
\end{theorem}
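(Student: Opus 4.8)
The plan is to show that the two candidate solutions $S_1$ and $S_2$ produced by the algorithm are both feasible $k$-path vertex covers of $G_N$, and then that whichever one is returned has cardinality at most $(1+\epsilon)\,|OPT(G_N)|$, with the whole procedure running in polynomial time for each fixed $\epsilon$. Feasibility of $S_2 = OPT(G_O)\cup V_A$ would follow from a case analysis on an arbitrary $k$-path $P$ in $G_N$: since the only edges of $G_N$ that are absent from $G_O$ lie in $E_A\cup E^a$ and are therefore all incident to $V_A$, if $P$ avoids $V_A$ entirely then $P$ uses only edges of $G_O$, hence is a $k$-path of $G_O$ and is covered by $OPT(G_O)$; otherwise $P$ meets $V_A$ and is covered there. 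Feasibility of $S_1$ is immediate, since it is initialized to $V(G_N)$ and only ever replaced by a subset that covers all $k$-paths.

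The key structural step is the inequality $|OPT(G_O)| \le |OPT(G_N)|$. I would prove it by noting that every $k$-path of $G_O$ is a $k$-path of $G_N$ lying entirely within $V_O$, so any vertex of $OPT(G_N)$ covering such a path is one of its own vertices and hence lies in $V_O$. Thus $OPT(G_N)\cap V_O$ is itself a feasible $k$-path vertex cover of $G_O$, giving $|OPT(G_O)| \le |OPT(G_N)\cap V_O| \le |OPT(G_N)|$. Combining this with $|S_2| \le |OPT(G_O)| + |V_A| = |OPT(G_O)| + c$ yields $|S_2| \le |OPT(G_N)| + c$.

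With these bounds the analysis splits into two cases according to the size of $OPT(G_N)$ relative to $m = \lceil c/\epsilon\rceil$. If $|OPT(G_N)| \ge m$, then $c \le \epsilon m \le \epsilon\,|OPT(G_N)|$, so $|S_2| \le |OPT(G_N)| + c \le (1+\epsilon)\,|OPT(G_N)|$. If instead $|OPT(G_N)| < m$, then $OPT(G_N)$ is one of the subsets of size at most $m$ enumerated by the \textbf{for} loop and it covers all $k$-paths, so at termination $S_1$ is a minimum-cardinality covering subset among those tried and therefore equals $OPT(G_N)$, giving $|S_1| = |OPT(G_N)|$. In either case $\min(|S_1|,|S_2|) \le (1+\epsilon)\,|OPT(G_N)|$.

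Finally I would argue polynomiality for fixed $\epsilon$: because $c$ is a constant, $m$ is a constant, so the loop enumerates $O(n^m)$ subsets where $n = |V(G_N)|$, and for each the test ``covers all $k$-paths'' runs in polynomial time for fixed $k$ (for instance by scanning all ordered $k$-tuples of surviving vertices to detect a remaining $k$-path). Hence for every fixed $\epsilon > 0$ the algorithm is polynomial and returns a $(1+\epsilon)$-approximation, which is precisely a PTAS. I expect the main obstacle to be the structural inequality $|OPT(G_O)| \le |OPT(G_N)|$; the rest is a short case split, so the care is chiefly in justifying that a vertex covering a $G_O$-path cannot be a newly inserted vertex.
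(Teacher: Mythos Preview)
Your proposal is correct and follows essentially the same approach as the paper: both establish $|OPT(G_O)|\le |OPT(G_N)|$ via the feasibility of $OPT(G_N)\cap V_O$ on $G_O$, bound $|S_2|\le |OPT(G_N)|+c$, and then split into the cases $|OPT(G_N)|\le m$ (where brute force finds an optimum) versus $|OPT(G_N)|\ge m$ (where $c\le \epsilon\,|OPT(G_N)|$). Your write-up is simply more explicit about feasibility and the case boundary than the paper's terse version.
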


\begin{proof}
Since $OPT(G_N)\cap V(G_O)$ and $OPT(G_O) \cup V_A$ is a feasible $k$-$path$ vertex cover on $G_O$ and $G_N$ respectively, we get
$$ |OPT(G_O)| \; \leq |OPT(G_N)| \; \leq |OPT(G_O)| + c \; \; \; \; \cdots \; (1) $$
If $OPT(G_N)$ has size at most $m$, it would have been found in step 7 of $Unwtd$-$kpath$. We know,
$$ |ALG(G_N)| \; \leq \; |OPT(G_O)| + c\; = |S_2| $$
and $S_2$ is picked when $|OPT(G_N)| \geq m \geq \frac{c}{\epsilon}$. Thus, approximation factor for $ALG(G_N)$ using inequality (1) and above observation is,
$$ \frac{|ALG(G_N)|}{|OPT(G_N)|} \; \leq \; \frac{|OPT(G_O)| + c}{|OPT(G_N)|} \; \leq \; \frac{|OPT(G_N)| + c}{|OPT(G_N)|}  \; \leq \; 1 \; + \; \epsilon $$

\end{proof}
Further, we analyze the runtime. Enumerating all possible $k$-$paths$ in a graph of $n$ vertices takes $O(n^k)$ time. Thus for a given set $X$, we can decide in polynomial time whether all paths of order $k$ are covered by the set. There are $O(n^m)$ subsets of size at most $m$, where $n = |V_N|$. The runtime of the algorithm is $O(n^m \cdot n^k) = O(n^{\frac{c}{\epsilon}} \cdot n^k)$, and hence a valid PTAS. Note that the runtime can be improved by using color coding algorithm for finding a $k$-$path$ \cite{DBLP:conf/stoc/AlonYZ94}, which runs in $O(2^k n^{O(1)})$ time.

\section*{{\large{3.1}} \ Subroutine for Reoptimzation of Weighted $k$-$PVCP$}
Let $\pi_k$ be weighted $k$-$PVCP$. $A_\rho(\pi_k)$ be a known $\rho$-approximation algorithm  for $\pi_k$. In reoptimization version of the problem $Reopt(\pi_k)$, a new graph $G_N$ is obtained by inserting a graph $G_A$ to $G_O$. \\\\
\textbf{Definition}: A family $\mathscr{F} = \{F_1, F_2, \cdots ,F_\psi\}$, where $\psi = |\mathscr{F}|$ and of subsets of $V_N$ is called a \textbf{good} family if it satisfies the following two properties:
\let\labelitemi\labelitemii
\begin{itemize}
    \item \textbf{Property 1:} $\exists$ $F_i \in \mathscr{F}$ such that  $F_i \subseteq OPT(G_N)$ and,
    \item \textbf{Property 2:} $\forall$ $F_i  \in \mathscr{F}$, $F_i$ covers all the $k$-$paths$ which contains at-least one vertex from $V(G_A)$ in graph $G_N$.
\end{itemize}

We give below a generic algorithm that works on the good family $\mathscr{F}$. This family of sets will be constructed in different ways for different problems. The details are provided in the respective sections. 

An algorithm for $Reopt(\pi_k)$ constructs the good family $\mathscr{F}$ and feeds it to the subroutine $Construct$-$Sol$. The algorithm $Construct$-$Sol$ iteratively prepares a solution $S_i$ for each set $F_i \in \mathscr{F}$. The inputs to the algorithm $Construct$-$Sol$ are: modified graph $G_N$, inserted graph $G_A$, old optimal solution $OPT(G_O))$, a good family $\mathscr{F}$ and $A_\rho(\pi_k)$.
\begin{algorithm}
\caption{$Construct$-$Sol(G_N,G_A,OPT(G_O),\mathscr{F},A_\rho(\pi_k))$}\label{alg:Construct$-$Sol}
\begin{algorithmic}[1]
\For{\texttt{$i = 1$ to $|\mathscr{F}|$}}
        \State \texttt{$S^1_i = OPT(G_O) \cup F_i$}
        \State \texttt{$G' = G_N[(V_N - V(G_A)) - F_i]$}
        \State \texttt{Run $A_\rho(\pi_k)$ on $G'$ and denote the output set as $S^2_i$}
        \State \texttt{$S^2_i = S^2_i \cup F_i$}
        \State \texttt{$S_i = minWeight(S^1_i , S^2_i)$}
\EndFor
\State $ALG(G_N) = minWeight(S_1, S_2, \dots, S_{|\mathscr{F}|})$
\State \textbf{return} $ALG(G_N)$
\end{algorithmic}
\end{algorithm}

\begin{lemma} \label{Le_1}
If $OPT(G)$ is an optimal solution for weighted $k$-$PVCP$ for $G$, then for any $S \subseteq OPT(G)$, $w(OPT(G[V - S])) \leq w(OPT(G)) - w(S)$.
\end{lemma}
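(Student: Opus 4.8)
The plan is to exhibit an explicit feasible $k$-path vertex cover for the reduced instance $G[V-S]$ whose weight equals $w(OPT(G)) - w(S)$, and then conclude by optimality of $OPT(G[V-S])$. The natural candidate is the set $T = OPT(G) - S$ obtained by deleting the removed vertices from the old optimum. Two facts about $T$ are immediate: since $S \subseteq OPT(G)$, the sets $S$ and $T$ partition $OPT(G)$, so by additivity of the weight function $w(T) = w(OPT(G)) - w(S)$; and since $T$ is disjoint from $S$, we have $T \subseteq V - S$, so $T$ is a legitimate subset of the vertex set of $G[V-S]$.

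The crux of the argument is to verify that $T$ is \emph{feasible}, i.e. that it covers every path of order $k$ in $G[V-S]$. I would argue as follows. Let $P$ be any path of order $k$ in the induced subgraph $G[V-S]$. Because $G[V-S]$ is an induced subgraph of $G$, $P$ is also a path of order $k$ in $G$, and all of its vertices lie in $V-S$. Since $OPT(G)$ is a feasible $k$-path vertex cover of $G$, the path $P$ must contain at least one vertex of $OPT(G)$. But $P$ avoids $S$ entirely, so this covering vertex cannot lie in $S$, and must therefore belong to $OPT(G) - S = T$. Hence every $k$-path of $G[V-S]$ meets $T$, establishing feasibility.

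Having shown that $T$ is a feasible $k$-path vertex cover of $G[V-S]$, optimality of $OPT(G[V-S])$ gives
$$ w(OPT(G[V-S])) \; \leq \; w(T) \; = \; w(OPT(G)) - w(S), $$
which is exactly the claimed inequality.

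This proof is short and essentially structural, so there is no real computational obstacle; the only point requiring care is the feasibility step, and specifically the observation that a vertex of $S$ can never be the one that covers a $k$-path lying inside $G[V-S]$, since such a path is disjoint from $S$. This forces the cover to be drawn from $T$, and the weight bookkeeping then follows directly from $S \subseteq OPT(G)$.
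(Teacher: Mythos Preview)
Your proof is correct and follows essentially the same approach as the paper: both show that $T = OPT(G) - S$ is a feasible $k$-path vertex cover of $G[V-S]$ with weight $w(OPT(G)) - w(S)$, and conclude by optimality. The paper phrases the feasibility step via the general observation that $F \cap V^*$ covers $G[V^*]$ whenever $F$ covers $G[V]$, whereas you spell out the path-chasing argument explicitly, but the content is the same.
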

\begin{proof}
If $F$ is a feasible $k$-$path$ cover for $G[V]$, then for any $V^* \subseteq V$, $F \cap V^*$ is a feasible $k$-$path$ cover for $G[V^*]$.

$OPT(G) - S$ is a feasible solution for $G[V - S]$ because $(V-S) \cap OPT(G) = OPT(G) - S$. Since $S \subseteq OPT(G)$, $w(OPT(G) - S) \; = \; w(OPT(G)) - w(S)$. Hence, $w(OPT(G[V - S])) \leq w(OPT(G)) - w(S)$.
\end{proof}

\begin{theorem}\label{thm_2}
The algorithm $Construct$-$Sol$ outputs a solution $ALG(G_N)$ with an approximation factor of $(2-\frac{1}{\rho})$, running in $O({|V(G_N)|}^2 \cdot \psi \cdot T(A_\rho(\pi_k),G_N))$ steps, where $\rho$ is the approximation factor of a known $A_\rho(\pi_k)$.
\end{theorem}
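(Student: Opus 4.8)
The plan is to establish the approximation factor by analyzing the two candidate solutions produced for the set $F_i$ that is guaranteed to be contained in $OPT(G_N)$ (such an $F_i$ exists by Property 1 of the good family), and to separately verify feasibility and bound the running time. Since $ALG(G_N)$ is the minimum-weight solution over all $S_i$, it suffices to show that for this particular ``good'' index $i$, the solution $S_i = minWeight(S_i^1, S_i^2)$ already achieves the claimed factor; the minimum over all indices can only be better.

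**First I would** argue feasibility. Fix any $F_i \in \mathscr{F}$. By Property 2, $F_i$ covers every $k$-path in $G_N$ that touches $V(G_A)$, so any remaining uncovered $k$-path lies entirely in $G_N[(V_N - V(G_A)) - F_i] = G'$. For $S_i^2 = A_\rho(\pi_k)(G') \cup F_i$, the first piece covers all $k$-paths inside $G'$ and $F_i$ covers the rest, so $S_i^2$ is feasible for $G_N$. For $S_i^1 = OPT(G_O) \cup F_i$, note $OPT(G_O)$ covers all $k$-paths of $G_O$, and every $k$-path of $G_N$ either avoids $V(G_A)$ (hence lies in $G_O$ and is covered by $OPT(G_O)$) or touches $V(G_A)$ (hence is covered by $F_i$); so $S_i^1$ is feasible too.

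**The main step** is the weight bound for the good index $i$, where $F_i \subseteq OPT(G_N)$. I would bound each candidate. For $S_i^2$, apply \Cref{Le_1}: removing $F_i \subseteq OPT(G_N)$ from $G_N$ gives $w(OPT(G_N[V_N - F_i])) \leq w(OPT(G_N)) - w(F_i)$, and since $G'$ is an induced subgraph of $G_N[V_N - F_i]$ its optimum is no larger, so the $\rho$-approximation yields $w(S_i^2) \leq \rho\bigl(w(OPT(G_N)) - w(F_i)\bigr) + w(F_i)$. For $S_i^1$, I would use that $OPT(G_O)$ is a feasible cover for $G_N$ restricted to the old part and bound $w(OPT(G_O)) \leq w(OPT(G_N))$ (the new optimum restricted to $V_O$ covers $G_O$), giving $w(S_i^1) \leq w(OPT(G_N)) + w(F_i)$. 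Taking the minimum of the two bounds and then optimizing over the unknown quantity $w(F_i)/w(OPT(G_N))$: one candidate decreases in $w(F_i)$ while the other increases, and equating them pins the worst case to the factor $(2 - \tfrac{1}{\rho})$, which is exactly where the $\rho$-approximation penalty and the additive $F_i$ cost balance.

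**The hard part will be** making the two-candidate trade-off rigorous rather than just heuristic: I must confirm that $minWeight(S_i^1, S_i^2)$ never exceeds $(2-\tfrac1\rho)\,w(OPT(G_N))$ for \emph{every} ratio $w(F_i)/w(OPT(G_N)) \in [0,1]$, by checking that the linear upper bound on $w(S_i^1)$ and the linear (in $w(F_i)$) upper bound on $w(S_i^2)$ cross at the claimed value, and that their lower envelope peaks there. The running-time claim is routine: the loop runs $\psi$ times, each iteration builds $G'$ and calls $A_\rho(\pi_k)$ once at cost $T(A_\rho(\pi_k), G_N)$, with the induced-subgraph construction and the set operations contributing the polynomial $O(|V(G_N)|^2)$ overhead, so the product gives the stated bound.
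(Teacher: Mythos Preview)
Your proposal is correct and follows essentially the same route as the paper: fix the good index $i$ with $F_i\subseteq OPT(G_N)$, bound $w(S_i^1)\le w(OPT(G_N))+w(F_i)$ via $w(OPT(G_O))\le w(OPT(G_N))$, bound $w(S_i^2)\le \rho\bigl(w(OPT(G_N))-w(F_i)\bigr)+w(F_i)$ via \Cref{Le_1}, and combine. The one technical difference is how the two bounds are merged: the paper takes the convex combination $(\rho-1)\cdot(\text{bound on }S_i^1)+1\cdot(\text{bound on }S_i^2)$, which directly yields $\rho\,w(S_i)\le(2\rho-1)\,w(OPT(G_N))$ without any case analysis, whereas you propose to analyze the lower envelope of the two linear-in-$w(F_i)$ bounds and locate its maximum at the crossover. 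Both are standard balancing arguments and give the same constant; the paper's weighted-average trick is a little cleaner because it avoids having to argue separately about the two sides of the crossover.
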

\begin{proof}
A graph $G_A$ is inserted to $G_O$ to yield the new graph $G_N$. By property 1 of the good family $\mathscr{F}$, the optimal solution for $G_N$ must include at least one set in $\mathscr{F}=\{F_1, \dots , F_{\psi}\}$, where $\psi=|\mathscr{F}|$. Thus, at least one $S_i (1\leq i \leq \psi)$ is prepared by the subroutine.

Let $OPT(G_N)_i$ be the optimal solution which includes $F_i$ and not $(V(G_A) - F_i)$. We prepare $\psi$ number of solutions for the graph $G_N$.

$S^1_i$ is a feasible $k$-$path$ cover for $G_N$, where feasibility follows from property 2 of the family. We can write the following inequalities:
$$ w(OPT(G_O)) \leq w(OPT(G_N)_i)$$
$$ w(S^1_i) = w(OPT(G_O) \cup F_i) \leq w(OPT(G_O)) + w(F_i)$$
 From above two inequalities,
 $$w(S^1_i)  \leq w(OPT(G_N)_i) + w(F_i)  \; \; \; \; \cdots (1)$$
 
 Another solution $S^2_i$ is prepared. From Lemma \ref{Le_1} and construction of $S^2_i$, we can write the following inequality:
$$w(S^2_i) \leq \rho(w(OPT(G_N)_i) - w(F_i)) + w(F_i) \; \; \; \; \cdots (2)$$
Since $\rho > 1$, adding $(\rho-1) \times (1)$ and $(2)$, we get
$$(\rho -1)w(S^1_i) + w(S^2_i) \leq (2\rho -1 )(w(OPT(G_N)_i))$$
Minimum weighted subset between $S^1_i$ and $S^2_i$ is chosen to be $S_i$. Then,
$$(\rho -1)w(S_i) + w(S_i) \leq (2\rho -1 )(w(OPT(G_N)_i))$$
$$\implies \forall i \in [1,\psi],  w(S_i) \leq (2 -\frac{1}{\rho} )(w(OPT(G_N)_i))\; \; \; \; \cdots (3)$$

We have prepared a set of $\psi$ number of solutions that is, $\{S_1, S_2, \cdots ,S_\psi\}$. By definition of $OPT(G_N)_i$ and property 2 of good family $\mathscr{F}$, we get that, if $F_i \subseteq OPT(G_N)$, then $w(OPT(G_N)_i) = w(OPT(G_N))$. By the property 1 of $\mathscr{F}$, there exists an $F_i$ such that $F_i \subseteq OPT(G_N)_i$. Hence, the following inequality for such an $i$ holds true:
$$ w(S_i) \leq (2 -\frac{1}{\rho} )(w(OPT(G_N)_i)) = (2 -\frac{1}{\rho} )(w(OPT(G_N)))$$
We know, $\forall i \in [1,\psi], \; w(ALG(G_N)) \leq w(S_i)$. So, 
$$ w(ALG(G_N)) \leq (2-\frac{1}{\rho}) (w(OPT(G_N)))$$
Thus, algorithm $Construct$-$Sol$ outputs a solution with an approximation factor of $(2-\frac{1}{\rho})$. Note that step 3 of the algorithm takes $O(|V_N|^2)$ time. Moreover, if the running time of $A_\rho(\pi_k)$ on input graph $G_N$ is $T(A_\rho(\pi_k),G_N)$, then the running time of algorithm $Construct$-$Sol$ is $O({|V(G_N)|}^2 \cdot \psi \cdot T(A_\rho(\pi_k),G_N))$.
\end{proof}

\section*{{\large{3.2}} \ Reoptimization of weighted $3$-$PVCP$}
Let $\pi_3$ be weighted $3$-$PVCP$. A constant-size graph $G_A=(V_A,E_A)$ is inserted to $G_O$ to yield the new graph $G_N = (V_N,E_N)$. Let $T(A_2(\pi_3),G_N)$ denote the runtime of $2$-approximation algorithm \cite{DBLP:journals/tcs/TuZ11} for $\pi_3$. Let $|V_N| = n$.

\begin{algorithm}
\caption{$Wtd$-$3pathGI(G_N,G_A,OPT(G_O),A_2(\pi_3))$}\label{alg:Wtd$-$3path}
\begin{algorithmic}[1]

\State $\mathscr{F} = \phi$ 
\For{\textbf{all} \texttt{$X \subseteq (V_A = V(G_A))$ and $X$ is a 3-path cover for $G_A$}}
    \State \texttt{$V_I$ = Set of isolated $v$ in $G_A[V_A-X]$ and $N_{G_N}(v) \cap V(G_O) \neq \phi$}   
    \State \texttt{$E_I$ = Set of isolated $(u,v)$ in $G_A[V_A-X]$ and $N_{G_N}(\{u,v\}) \cap V(G_O) \neq \phi$}
    \For{\textbf{all} \texttt{ $ (u,v) \in E_I$}}
        \State \texttt{$X = X \cup N_{G_O}(\{u,v\})$}
    \EndFor
    \State \texttt{$Y = N_{G_N}(V_A) - X$}
    \For{\textbf{all} \texttt{$Y' \subseteq Y$, $|Y'|\leq |V_I|$  }}
    \State \textbf{if}\texttt{($Y-Y'$) is a $3$-$path$ cover for $G_N[V_I \cup Y]$}
        \State $\;\;\;\;X'= X \cup (Y-Y')$
            \State $\;\;\;\;X' = X' \cup N_{G_O}(Y')$
            \State \texttt{$\;\;\;\;\mathscr{F} = \mathscr{F} \cup \{X'\}$}
    \EndFor
\EndFor
\State \textbf{return} $Construct$-$Sol(G_N,G_A,OPT(G_O),\mathscr{F},A_2(\pi_3))$
\end{algorithmic}
\end{algorithm}

\begin{theorem}
Algorithm $Wtd$-$3path$ is a $1.5$ approximation for $Reopt(\pi_3)$ under constant-size graph insertion.
\end{theorem}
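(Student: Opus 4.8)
The plan is to reduce the statement to \Cref{thm_2}. Concretely, I would show that the family $\mathscr{F}$ assembled by $Wtd$-$3pathGI$ is a \textbf{good} family in the sense of Properties 1 and 2, feed it to $Construct$-$Sol$, and invoke \Cref{thm_2} with the Tu--Zhou $2$-approximation \cite{DBLP:journals/tcs/TuZ11} in the role of $A_\rho(\pi_3)$. Since here $\rho = 2$, the guarantee $2-\tfrac{1}{\rho}$ specializes to exactly $1.5$. So the whole argument is the verification that $\mathscr{F}$ is good, which splits into feasibility (Property 2, needed for every $F_i$) and an optimality-alignment claim (Property 1, needed for one $F_i$).

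For \textbf{Property 2}, I would show that every $X' \in \mathscr{F}$ hits every $3$-path of $G_N$ that meets $V_A$. The structural driver is that the outer loop only iterates over $X$ that is a $3$-path cover of $G_A$, so $G_A[V_A - X]$ has maximum degree at most $1$, i.e. it is a disjoint union of isolated vertices and isolated edges; the ones with a neighbour in $V(G_O)$ are recorded as $V_I$ and $E_I$. I would then run a short case analysis on a $3$-path $P$ containing a vertex $a \in V_A$. If some $V_A$-vertex of $P$ lies in $X$, then $X \subseteq X'$ already covers $P$. Otherwise the uncovered $V_A$-vertices of $P$ lie on isolated edges or isolated vertices of $G_A[V_A - X]$: for an isolated edge $(u,v) \in E_I$ the algorithm has inserted $N_{G_O}(\{u,v\})$ into $X$, killing every path of the form $w - u - v$; for a lone vertex $v \in V_I$, any path $w_1 - v - w_2$ with $w_1,w_2 \in V(G_O)$ has $w_1,w_2 \in V_I \cup Y$, so it is covered by $Y - Y' \subseteq X'$ thanks to the inner feasibility check on $G_N[V_I \cup Y]$, while any path $v - w - w'$ with $w \in Y' $ uncovered is covered because $w' \in N_{G_O}(Y') \subseteq X'$. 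This exhausts the cases, giving feasibility.

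For \textbf{Property 1}, fix an optimal solution $OPT(G_N)$ and set $X^{\ast} = OPT(G_N) \cap V_A$. Since any $3$-path inside $G_A$ must be met inside $V_A$, $X^{\ast}$ is a $3$-path cover of $G_A$, hence one of the outer-loop values. I would first argue that the isolated-edge augmentation stays inside $OPT(G_N)$: if $(u,v) \in E_I$ then $u,v \notin OPT(G_N)$, so each $w \in N_{G_O}(\{u,v\})$ is forced into $OPT(G_N)$ to cover the uncovered $3$-path $w - u - v$, whence the augmented $X \subseteq OPT(G_N)$. Next I would take $Y' = Y \setminus OPT(G_N)$, so that $Y - Y' = Y \cap OPT(G_N)$; by the restriction fact used in the proof of \Cref{Le_1} (intersecting a feasible cover with an induced vertex set yields a feasible cover there) and because $V_I \cap OPT(G_N) = \emptyset$, this $Y - Y'$ is a $3$-path cover of $G_N[V_I \cup Y]$, so the inner check succeeds. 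The size bound $|Y'| \le |V_I|$ would come from an injective map sending each uncovered $w \in Y$ to an uncovered neighbour in $V_I$: every $v \in V_I$ has at most one uncovered neighbour (two would give an uncovered $3$-path through $v$), so distinct $w$'s map to distinct $V_I$-vertices. Together these place the resulting $X' = X \cup (Y-Y') \cup N_{G_O}(Y')$ into $\mathscr{F}$ with $X' \subseteq OPT(G_N)$.

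The step I expect to be the \emph{main obstacle} is exactly this last containment $N_{G_O}(Y') \subseteq OPT(G_N)$, together with the injectivity that yields $|Y'| \le |V_I|$. Both rely on every uncovered vertex of $Y$ being adjacent to an uncovered vertex of $V_I$, which forces its remaining $G_O$-neighbours into $OPT(G_N)$. A priori this can fail: an uncovered $w \in Y$ may be adjacent only to \emph{covered} insertion vertices while hanging an uncovered isolated edge $w - w'$ into $G_O$ (the path $a - w - w'$ is then legitimately covered by the covered insertion vertex $a$, so $w'$ is not forced into $OPT(G_N)$). Such a $w$ would have to be placed in $Y'$ to keep $Y - Y' \subseteq OPT(G_N)$, yet would drag $w' \notin OPT(G_N)$ into $X'$ and would not be charged to any $V_I$-vertex. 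The resolution I would pursue is to not argue about an arbitrary optimum but to first normalize to a canonical optimal solution in which this configuration does not arise, via an exchange argument that shifts the cover from such an insertion vertex $a$ onto the boundary vertex $w$ at no increase in weight while preserving feasibility; making this normalization simultaneously weight-preserving, feasibility-preserving, and exhaustive is the delicate part of the proof.
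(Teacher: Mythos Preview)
Your high-level plan---verify that $\mathscr{F}$ is a good family and then invoke \Cref{thm_2} with the $2$-approximation of \cite{DBLP:journals/tcs/TuZ11}---is precisely the paper's route. In fact the paper's argument is much terser than yours: after describing the three phases in one sentence each it simply asserts ``since we consider all feasible subsets $X$ and feasible $Y-Y'$ for the corresponding $X$, the constructed family $\mathscr{F}$ satisfies both the properties of good family,'' and then reads off $2-\tfrac12=1.5$. Your case analysis for Property~2 and your forcing argument for the $E_I$-augmentation already go beyond what the paper writes down.

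The obstacle you single out, however, is genuine, and the paper does not engage with it. Worse, your proposed exchange-based normalisation cannot repair it in the weighted setting. Take $G_O$ to be two disjoint edges $w_1u_1$ and $w_2u_2$ (so $OPT(G_O)=\emptyset$), insert a single vertex $a$ with attachment edges $aw_1,aw_2$, and give every vertex unit weight. The unique optimum of $G_N$ is $\{a\}$. Running the algorithm: for $X=\{a\}$ one has $V_I=\emptyset$, hence $Y'=\emptyset$ is forced and $X'=\{a,w_1,w_2\}$; for $X=\emptyset$ one has $|V_I|=1$, so $|Y'|\le 1$ and the resulting sets are $\{w_1,w_2\}$, $\{w_2,u_1\}$, $\{w_1,u_2\}$. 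No member of $\mathscr{F}$ is contained in $\{a\}$, so Property~1 fails outright, and $Construct$-$Sol$ returns weight $2$ against optimum $1$. Since $\{a\}$ is the \emph{only} optimum here, there is nothing to normalise to, and no weight-neutral exchange is available. So the difficulty you identified is not merely the delicate step in your plan; it is a gap that the paper's own proof glosses over as well, and the resolution you sketch does not close it.
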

\begin{proof}

The algorithm works in 3 phases to construct the good family $\mathscr{F}$. The algorithm prepares a subset $X$ for each feasible $3$-$path$ cover $X$ for $G_A$ because the optimal solution for $G_N$ must contain one subset among the feasible $X$'s. In the first phase, if an edge in $G_A[V_A - X]$ has a neighbour in $G_O$, it must be included in $X$.

Let $Y = N_{G_N}(V_A) - X$. In the second phase , $G_N[V_I \cup Y]$ is made free from all $3$-$paths$ by removing a feasible subset $Y-Y'$. In the third phase, the neighbours of the vertices in $Y'$ are included in $X'$ because such a neighbour will form $3$-$path$ with the vertices in $Y'$ and $V_I$.

Since we consider all feasible subsets $X$ and feasible $Y-Y'$ for the corresponding $X$, the constructed family $\mathscr{F}$ satisfies both the properties of good family. Thus from Theorem 2 and $A_2(\pi_3)$, we get the desired approximation. Further, we analyze the running time. Let $c = |V(G_A)|$. The maximum cardinality of $Y$ is $n$. Then, the steps 3 to 12 in the algorithm run in $O(n^{3} n^{c+1})$ because $|Y'| \leq c$. Thus $|\mathscr{F}| \in O(c^2 \cdot 2^c \cdot n^{c+4})$. Hence the algorithm $Wtd$-$3path$ runs in $O(n^{c+6} \cdot 2^{c} \cdot T(A_2(\pi_3),G))$.
\end{proof}

\section*{{\large{3.3}} \ Reoptimization of weighted $k$-$PVCP$ $(k\geq 4)$ for bounded degree graphs}
A graph free from $2$-$paths$ contains only isolated vertices. A graph that does not have any $3$-$path$ contains isolated vertices and isolated edges. But, in the case of graphs that are free from $k$-$paths$ ($k\geq4$), star graph is a possible component. As the number of subsets needed to be considered for preparation of $\mathscr{F}$ would be exponential in the vertex degree, we restrict the reoptimization of weighted $k$-$PVCP$ $(k \geq 4)$ to bounded degree graphs. Lemma 3 in Appendix A shows that the problem on bounded degree graphs is NP-complete. 

The local modification which we consider for reoptimization is constant-size graph insertion. Let $G_O = (V_O, E_O)$ be the old graph. Given $G_O$, constant-size graph $G_A = (V_A,E_A)$ and attachment edges $E^a$, the new graph $G_N = (V_N,E_N)$ is obtained. Let $|V_N| = n$ and $|V_A| = c$. Let the maximum degree of the graph $G_N$ be $\Delta$. We use $P_k(G,V')$ to denote the collection of $k$-$paths$ in graph $G$ containing at least one vertex from $V' \cap V(G)$. For a set of vertices $V'$, a graph is said to be $V'$-connected graph if every connected component in the graph contains at least one vertex from $V'$. Let $A_\rho(\pi_k)$ be a $\rho$-approximation algorithm for weighted $k$-$PVCP$ $(\pi_k)$ and $T(A_{\rho}(\pi_k),G_N)$ be the running time of $A_\rho(\pi_k)$ on $G_N$. \\\\
\textbf{Definition:} We define a variation of $BFS$ on a graph $G$, where traversal starts by enqueuing a set of vertices $V'$ instead of a single root vertex. Initially, all the vertices in $V'$ are at the same level and unvisited. Now the unvisited nodes are explored in breadth first manner. In this variation, we obtain the $BFS$ forest for the input $(G,V')$, where the vertices of $V'$ are at level 1 and the subsequent levels signify the order in which the vertices are explored.

Consider the $BFS$ forest obtained from $V_A$ in $G_N$. We use $L_i$ to denote the set of vertices at level $i$ $(i \geq 0)$ of the $BFS$ forest. Let $S_j = \bigcup _{i = 0}^j L_i$, where $L_0 = \phi$ and $L_1 = V_A$. Then $L_i = N_{G_N}(S_{i-1})$ for $i \geq 2$. Note that this $BFS$ forest has $|V_A|$ number of disjoint $BFS$ trees, where the trees have distinct root vertices from $V_A$.

\begin{lemma} \label{Le_4}
In a $BFS$ forest obtained after performing $BFS$ traversal from a set of vertices $V_A \subseteq V$ in a graph $G=(V,E)$ having no $k$-$paths$, the number of vertices at each level is at most $|V_A|\Delta (\Delta-1)^{\lceil \frac{k-5}{2} \rceil}$.
\end{lemma}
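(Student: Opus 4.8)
Plan for proving Lemma 4\textbf{Proof proposal.}
The plan is to bound the number of vertices at each BFS level by controlling how the path-free assumption limits the branching of the BFS forest. The key quantitative idea is that a graph with no $k$-path cannot contain long paths, so any single BFS tree rooted at a vertex of $V_A$ cannot grow to depth more than roughly $\lceil (k-1)/2 \rceil$ before being forced to terminate (otherwise a root-to-leaf path together with a sibling branch would assemble a $k$-path). First I would formalize this observation: in a tree rooted at some $r \in V_A$, a vertex at level $\ell$ of the BFS tree is connected to $r$ by a path of $\ell$ vertices inside $G$, and by combining two such paths through a common ancestor one can realize a path in $G$ whose vertex count grows with the depth. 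Making this combinatorial accounting precise is where the exponent $\lceil \frac{k-5}{2}\rceil$ will emerge.

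Next, I would separate the analysis into the structure \emph{within one BFS tree} versus the total contribution of all $|V_A|$ trees. Since the forest consists of $|V_A|$ disjoint trees (as noted in the excerpt, one per root in $V_A$), the count at any level is at most $|V_A|$ times the maximum number of vertices that a single tree can place at that level. For a single tree, I would argue by the degree bound: the root has at most $\Delta$ children (giving the factor $\Delta$ at level $2$), and every non-root vertex, having already used one incident edge to reach its parent, contributes at most $\Delta - 1$ further children. This yields the standard $\Delta(\Delta-1)^{d-1}$-type growth, but the crucial point is that the \emph{effective depth} $d$ at which we evaluate this growth is capped by the no-$k$-path constraint rather than running freely.

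The main obstacle, which I expect to absorb most of the work, is pinning down exactly how the absence of $k$-paths forces the cap $\lceil \frac{k-5}{2}\rceil$ on the exponent. The intuition is that a $k$-path can be formed by taking two disjoint descending branches from a common vertex and concatenating them through that vertex; if each branch can descend $t$ levels, the resulting path has on the order of $2t + (\text{small constant})$ vertices. Setting this below $k$ forces $t$ to be at most about $(k-5)/2$, and the ceiling accounts for parity. I would therefore prove the contrapositive form of the branching statement: if any level beyond $1 + \lceil \frac{k-5}{2}\rceil$ had a vertex with a second descending branch available, one could exhibit a $k$-path in $G$, contradicting the hypothesis. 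Care must be taken to ensure the two branches used are vertex-disjoint and that the path is simple, so I would route them through distinct children of a shared ancestor. Once the depth cap is established, multiplying the per-level branching bound $\Delta(\Delta-1)^{\lceil \frac{k-5}{2}\rceil}$ by the number of roots $|V_A|$ gives the claimed bound, completing the argument.
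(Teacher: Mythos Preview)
Your overall strategy matches the paper's: reduce to a single BFS tree (then multiply by $|V_A|$), use the degree bound $\Delta(\Delta-1)^{d}$ for branching, and invoke the ``two descending branches through a common ancestor form a long path'' argument to control the exponent. That is exactly what the paper does.

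However, one early claim in your plan is false and, if used as stated, would break the proof: you write that a single BFS tree ``cannot grow to depth more than roughly $\lceil (k-1)/2\rceil$ before being forced to terminate.'' A path on $k-1$ vertices has no $k$-path, yet BFS from an endpoint reaches depth $k-1$. So the depth of the tree is \emph{not} what is bounded. The paper's argument (and what your later ``effective depth'' language is groping toward) is a statement about \emph{width}: for any level $i$, all vertices in $L_i$ must share a common ancestor at some level $j$ with $i-j \le \lceil (k-3)/2\rceil$; otherwise two vertices in $L_i$ with a lower LCA would yield a path of order at least $2\lceil (k-1)/2\rceil + 1 \ge k$. From that common ancestor, branching gives $|L_i|\le \Delta(\Delta-1)^{i-j-1}\le \Delta(\Delta-1)^{\lceil (k-5)/2\rceil}$. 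Make sure your write-up argues via this common-ancestor constraint and does not rely on a global depth cap, since no such cap holds.
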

\begin{proof}
Consider the case when $BFS$ is performed from a single vertex set $V_A = \{v_1\}$ to obtain a $BFS$ tree.
For any level $i$, $|L_i| \leq \Delta (\Delta-1)^{i-2}$. Thus the statement holds true for $i\leq \lceil \frac{k-1}{2} \rceil$. 
For the case when $i > \lceil \frac{k-1}{2} \rceil$, let $j = i - \lceil \frac{k-3}{2} \rceil$. We claim that there exists a vertex $v$ in $L_j$ such that $v$ is a common ancestor for all the vertices in $L_i$. Assume to contrary that the claim is false. If $|L_i|=1$ the claim is trivially true. Otherwise we have two distinct vertices $v_x$ and $v_y \in L_i$ such that they have the lowest common ancestor in $L_{j'}$, where $1 \leq j' \leq j-1 = i - \lceil \frac{k-1}{2} \rceil$. This imposes a path $\langle v_i,\cdots v, \cdots v_j \rangle$ of order $\lceil \frac{k-1}{2} \rceil + 1 + \lceil \frac{k-1}{2} \rceil \geq k$. But it contradicts the fact that $G$ has no paths of order $k$ or more. Hence $|L_i| \leq \Delta (\Delta-1)^{i-j-1} = \Delta (\Delta-1)^{\lceil \frac{k-5}{2} \rceil}$.

Now, when $BFS$ is performed for the case when $|V_A| > 1$, the $BFS$ forest obtained has $|V_A|$ number of disjoint $BFS$ trees where each tree satisfies the above argument. Hence the number vertices in each level in the $BFS$ forest is at most $|V_A| \Delta (\Delta-1)^{\lceil \frac{k-5}{2} \rceil}$
\end{proof}

\begin{algorithm}
\caption{$Construct$-$F(X,V,L,level,V_A,G_N,\mathscr{F},k)$}\label{alg:Wtd$-$kpath}
\begin{algorithmic}[1]
\State \texttt{$\mathscr{F} = \mathscr{F} \cup \{X \cup L\}$}
\State \textbf{if} $level \geq k-1$
\State \textbf{$\;\;\;\;\textbf{return} \  \mathscr{F}$}

\State $b = (|V_A| \Delta (\Delta-1)^{\lceil \frac{k-5}{2} \rceil})$
\For{\texttt{each non-empty subset $V'$ of $L$ and $|V'| \leq b$}}
        \State \textbf{if}\texttt{ $G_N[V \cup V']$ is a k-path free $V_A$-connected graph}
        \State \texttt{     $\;\;\;X'' = X \cup (L-V')$}
        \State \texttt{     $\;\;\;V'' = V \cup V'$}
        \State \texttt{     $\;\;\;L'' = N_{G_N}(V'') - X''$}
        \State \texttt{     $\;\;\;\mathscr{F} = Construct$-$(X'', V'',L'',level+1,V_A,G_N,\mathscr{F},k)$}
\EndFor
\State \textbf{$\textbf{return} \  \mathscr{F}$}
\end{algorithmic}
\end{algorithm}

\begin{algorithm}
\caption{$Wtd$-$kpath(G_N,G_A,OPT(G_O),A_\rho(\pi_k),k)$}\label{alg:Wtd$-$kpath}
\begin{algorithmic}[1]
\State Initialization: $\mathscr{F} = \phi$, $level =1$, $X=\phi$ and $V=\phi$.
\State $\mathscr{F} = Construct$-$F(X,V,V(G_A),level,V(G_A),G_N,\mathscr{F},k)$

\State $ALG(G_N) = Construct$-$Sol(G_N,G_A,OPT(G_O),\mathscr{F},A_\rho(\pi_k))$
\State \textbf{return}
$ALG(G_N)$
\end{algorithmic}
\end{algorithm}

\begin{theorem}
Algorithm $Wtd$-$kpath$ is a $(2-\frac{1}{\rho})$ approximation for $Reopt(\pi_k)$ under graph insertion and runs in $O(n^{O(1)} \cdot {2}^{k (\Delta+1) b }\cdot T(A_{\rho}(\pi_k),G_N))$, where $b=|V_A|\Delta (\Delta-1)^{\lceil \frac{k-5}{2} \rceil}$ and $ ((\Delta+1) b) \in O(\log n)$.
\end{theorem}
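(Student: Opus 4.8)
The plan is to route the approximation guarantee entirely through Theorem~\ref{thm_2}: that result already certifies a factor of $(2-\frac{1}{\rho})$ for $Construct$-$Sol$ run on \emph{any} good family, so it suffices to show that the family $\mathscr{F}$ assembled by $Construct$-$F$ is good (Properties~1 and~2) and then to bound $\psi=|\mathscr{F}|$ so that the running time of Theorem~\ref{thm_2} collapses to the stated expression. I would begin by recording the invariants maintained at every call $Construct$-$F(X,V,L,\mathit{level},\dots)$: the induced graph $G_N[V]$ is $k$-path-free, $V_A\subseteq V\cup X$, and (for every call below the root) $L=N_{G_N}(V)-X$, whence $N_{G_N}(V)\subseteq X\cup L$. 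All three follow by induction on the recursion depth from lines~6--9.

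For Property~2, each member $F=X\cup L$ emitted in line~1 must hit every $k$-path of $G_N$ meeting $V_A$. The root contributes $F=V_A$, which is trivially such a cover. For a deeper call, $N_{G_N}(V)\subseteq X\cup L$ means $X\cup L$ separates $V$ from the unexplored region $R=V_N-V-X-L$ (no edge joins $V$ to $R$). If a $k$-path $P$ through $V_A$ avoided $X\cup L$, then $P\subseteq V\cup R$; since $P$ meets $V_A\subseteq V\cup X$ while missing $X$, it meets $V$, and the separation forces $P\subseteq V$, contradicting that $G_N[V]$ is $k$-path-free. Hence $F$ covers $P$.

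The core of the argument, and the step I expect to be the main obstacle, is Property~1. Put $K=V_N-OPT(G_N)$ and run the set-BFS of the Preliminaries inside $G_N[K]$ from the root set $V_A\cap K$, producing levels $\hat L_1,\hat L_2,\dots$. I would follow the branch that keeps $V'=\hat L_i$ at stage $i$ and verify by induction that it is genuinely explored: one checks $\hat L_i = L\cap K$ (so the discarded frontier $L-\hat L_i=L\cap OPT(G_N)$ stays inside $OPT(G_N)$, keeping $X\subseteq OPT(G_N)$), that $G_N[\hat L_1\cup\cdots\cup\hat L_i]$ is $k$-path-free as an induced subgraph of $G_N[K]$ and is $V_A$-connected, and that $|\hat L_i|\le b$ by Lemma~\ref{Le_4}, so the cardinality cap in line~5 does not reject it. Once $V$ has swallowed the whole $V_A$-connected component of $G_N[K]$, the next frontier contains no kept vertex, i.e.\ $L=N_{G_N}(V)-X\subseteq OPT(G_N)$, so the emitted $F=X\cup L\subseteq OPT(G_N)$ witnesses Property~1. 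The delicate point --- and where the analysis is tightest --- is that this clean frontier surfaces one level \emph{beyond} the deepest populated BFS level; since $k$-path-freeness bounds that depth only by $k-1$, one must argue with care that the recursion, halted by the test $\mathit{level}\ge k-1$, is still alive to emit it. Making this depth bookkeeping match the termination condition is the crux I would spend the most effort on.

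For the running time I would bound the recursion tree. New frontier vertices at a call arise only as neighbours of the at most $b$ freshly kept vertices (Lemma~\ref{Le_4}), so $|L|\le\Delta b\le(\Delta+1)b$; line~5 therefore branches into at most $2^{(\Delta+1)b}$ subsets, and with depth $O(k)$ the tree has at most $2^{k(\Delta+1)b}$ nodes, each emitting one set, giving $\psi\le 2^{k(\Delta+1)b}$. Every node does only $n^{O(1)}$ work (the $k$-path-free and $V_A$-connectivity tests and one set-BFS). Substituting $\psi$ into Theorem~\ref{thm_2} yields the bound $O\!\big(n^{O(1)}\cdot 2^{k(\Delta+1)b}\cdot T(A_{\rho}(\pi_k),G_N)\big)$, and the hypothesis $(\Delta+1)b\in O(\log n)$ makes $2^{k(\Delta+1)b}$ polynomial in $n$ for fixed $k$, so the reoptimization runs in polynomial time.
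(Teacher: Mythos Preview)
Your plan mirrors the paper's proof closely: establish invariants on $(X,V,L)$ along the recursion, derive Property~2 from the fact that $X\cup L$ separates $V$ (which contains $V_A\setminus X$) from the rest of $G_N$, argue Property~1 by following the branch that keeps exactly the BFS layers of the $V_A$-component inside $G_N[V_N\setminus OPT(G_N)]$, and bound $|\mathscr{F}|$ and the branching factor using Lemma~\ref{Le_4}. The paper does precisely this, though it states the invariants with the level indexing $V\subseteq S_{\mathit{level}}$, $L\subseteq L_{\mathit{level}+1}$ rather than via the separation language you use.

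On the depth issue you single out, you are right to be uneasy, and in fact you are more careful than the paper here. The paper argues only that a $k$-path-free $V_A$-connected graph has BFS depth at most $k-1$ and that ``the algorithm explores all feasible subsets $V'$ for each $level\le k-1$''; but the branching in line~5 runs only while $\mathit{level}<k-1$, so the last emission occurs at $\mathit{level}=k-1$ with $V$ having absorbed only layers $\hat L_1,\dots,\hat L_{k-2}$. If the component in $G_N[V_N\setminus OPT(G_N)]$ actually attains depth $k-1$ (which $k$-path-freeness alone does not rule out; a bare path on $k-1$ vertices already does), the ``clean'' frontier $L\subseteq OPT(G_N)$ would only appear at level $k$, which the recursion never reaches. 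Your instinct that this is the crux is correct; the paper's proof does not close this gap either, and resolving it requires either tightening the depth bound (e.g.\ arguing, via a two-branch path through the BFS tree, that depth $k-1$ forces a $k$-path once one accounts for the extra neighbour producing the emitted frontier) or relaxing the termination test to $\mathit{level}\ge k$.
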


\begin{proof}
We first prove that for every call to the function \\ $Construct$-$F(X,V,L,level,V_A,G_N,\mathscr{F},k)$, the following invariants on $V$, $X$ and $L$ are maintained:
\let\labelitemi\labelitemii
\begin{itemize}
    \item $V \subseteq S_{level}$ and $G_N[V]$ is a k-path free $V_A$-connected graph.
    \item $X$ is the set of neighbours of $V$ in $G_N[S_{level}]$
    \item L is the set of neighbours of $V$ in graph $G_N$ that are also in $L_{level+1}$, i.e $L = N_{G_N}(V)-S_{level}$.
\end{itemize}
The above invariants trivially hold true during the first call to the function $Construct$-$F$. Assuming the invariants to be true during a call to $Construct$-$F$, we show that the subsequent recursive calls maintain the invariants. Note that the parameter $'level'$ is incremented to $level+1$ during the recursive call. $G_N[V \cup V'] = G_N[V'']$ is a k-path free $V_A$-connected subgraph in $G_N$. Also, $V'' \subseteq S_{level+1}$ because $V \subseteq S_{level}$ and $L \subseteq L_{level+1}$. The invariance property of $X$ and $L$ implies $X'' = X \cup (L-V')$ is the set of neighbours of $V''$ in $G_N[S_{level+1}]$. From previous observation about $X''$ and $V''$, we get that $L''=N_{G_N}(V'') - X''$ is the set of neighbours of $V''$ in $G_N$  which are also in $L_{level+2}$. Thus, the invariants are maintained. 

Note that $X$ covers all the paths in $P_k(G_N[S_{level}], V_A)$. $X \cup L$ is a $k$-$path$ cover for $G_N$ because the paths in $P_k(G_N , V_A) - P_k(G_N[S_{level}], V_A)$ contain at least one vertex from $L$. Thus, $\{X \cup L\}$ is included in $\mathscr{F}$ to satisfy property 2 of good family.

By Lemma 2, it is sufficient to consider non empty subsets $V'$ of size at most $(|V_A|\Delta (\Delta-1)^{\lceil \frac{k-5}{2} \rceil})$ from subsequent level $L$ to construct $V''$.  For each recursive call, the case when $V'$ or $L$ is empty is handled in the step 1. A $V_A$-connected graph that has no $k$-$paths$ will have a maximum level of $k-1$ in the $BFS$ forest. The algorithm explores all feasible subsets $V'$ for each $level \leq k-1$. Thus the property 1 of good family holds true for $\mathscr{F}$, because the family includes all possibilities for $\{X\cup L\}$ that covers $P_k(G_N, V_A)$. Thus, the constructed family $\mathscr{F}$ is indeed a good family.

Let $RT(l)$ be the running time of the function $Construct$-$F$, where $l$ is the parameter $'level'$. Let $C = \Sigma_{i=1}^{i=b} {{\Delta b} \choose {i}}$. Observe that $|L| \leq (\Delta \cdot b)$ due to the construction of $L''$ in the previous recursion. As we are choosing sets of size at most $b$ from $L$, we get the recursion $RT(l) = O(n^{O(1)} \cdot C^k \cdot RT(l+1))$ for $1 \leq l \leq (k-1)$ and $RT(k)=O(n^{O(1)})$. Thus step 6 in $Wtd$-$kpath$ runs in $O(n^{O(1)} \cdot C^k)$ time. In each function call, $|\mathscr{F}|$ is incremented by one element. Thus, $|\mathscr{F}| \leq 2 ^{kb}$ because $|V''|\leq b$ for each level. Note that $C \leq 2^{b\Delta}$. Hence using Theorem $2$, the algorithm $Weighted$-$kpath$ runs in $O(n^{O(1)} \cdot C^k \cdot {2} ^{k b} \cdot T(A_{\rho}(\pi_k),G_N))$ = $O(n^{O(1)} \cdot {2} ^{k(\Delta+1) b} \cdot T(A_{\rho}(\pi_k),G_N))$ and achieves the desired approximation.

\end{proof}

Using $3$-$approximation$ algorithm for weighted $4$-$PVCP$  \cite{camby2014primal} and Theorem $4$, we get the following corollary:
\begin{corollary}
Algorithm $Wtd$-$kpath$ is a $\frac{5}{3}$ approximation for $Reopt(\pi_4)$ under constant-size graph insertion, where $(\Delta) \in O(1)$  and $A_\rho(\pi_4)$ is $A_3(\pi_4)$.
\end{corollary}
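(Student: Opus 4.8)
The plan is to obtain the result as a direct specialization of Theorem 4. I would set $k=4$ and take the known $3$-approximation algorithm for weighted $4$-PVCP of \cite{camby2014primal} as the algorithm $A_\rho(\pi_k)$ fed to $Wtd$-$kpath$, so that $\rho = 3$. Theorem 4 then immediately yields an approximation factor of $2 - \frac{1}{\rho} = 2 - \frac{1}{3} = \frac{5}{3}$, matching the claim. All that remains is to verify that the running-time precondition of Theorem 4, namely $(\Delta+1)b \in O(\log n)$ where $b = |V_A|\Delta(\Delta-1)^{\lceil \frac{k-5}{2}\rceil}$, is satisfied under the hypotheses of the corollary, so that the bound furnished by Theorem 4 is genuinely polynomial.

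Next I would substitute $k=4$ into $b$. Since $\lceil \frac{4-5}{2}\rceil = \lceil -\frac{1}{2}\rceil = 0$, the factor $(\Delta-1)^{\lceil \frac{k-5}{2}\rceil}$ becomes $(\Delta-1)^0 = 1$ and hence $b = |V_A|\Delta = c\Delta$, where $c = |V_A|$. Invoking the two standing assumptions of the corollary --- that the insertion is constant-size, so $c \in O(1)$, and that $\Delta \in O(1)$ --- I conclude $b \in O(1)$ and therefore $(\Delta+1)b \in O(1) \subseteq O(\log n)$. Consequently the running time $O(n^{O(1)} \cdot 2^{k(\Delta+1)b} \cdot T(A_3(\pi_4),G_N))$ from Theorem 4 has a constant exponential factor and collapses to a polynomial in $n$, confirming efficiency alongside the $\frac{5}{3}$ guarantee.

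Because the corollary is essentially a plug-in, I expect no real difficulty: the only point requiring a moment's care is checking that the ceiling $\lceil \frac{k-5}{2}\rceil$ evaluates to $0$ rather than a negative number at $k=4$, which would otherwise make the expression for $b$ ill-defined; as $\lceil -\frac{1}{2}\rceil = 0$ this is benign. With that observed, the conclusion follows directly from Theorem 4 together with the existence of the $3$-approximation for weighted $4$-PVCP.
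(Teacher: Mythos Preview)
Your proposal is correct and follows exactly the route the paper takes: the corollary is stated as an immediate consequence of Theorem~4 together with the $3$-approximation for weighted $4$-$PVCP$ from \cite{camby2014primal}, and you have simply spelled out the substitution $\rho=3$, $k=4$ and the verification that $(\Delta+1)b\in O(1)$ under the constant-degree and constant-size insertion hypotheses. If anything, your write-up is more detailed than the paper's one-line derivation.
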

In Appendix B, we present approximation algorithms for weighted $k$-$path$ vertex cover problem. The $n$-approximation algorithm runs in $O(2^k n^{O(1)})$ time, where $n$ is the size of input graph. The $k$-approximation algorithm is a primal dual based algorithm. Using Theorems 5 and 6 of Appendix, we get the following corollaries:

\begin{corollary}
Algorithm $Wtd$-$kpath$ is a $(2 - \frac{1}{n})$ approximation for $Reopt(\pi_k)$ under constant-size graph insertion, where $(\Delta) \in O(1)$  and $A_\rho(\pi_k)$ is $A_n(\pi_k)$.
\end{corollary}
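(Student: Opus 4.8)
The plan is to instantiate Theorem~4 with the $n$-approximation algorithm $A_n(\pi_k)$ furnished by Theorem~5 of the Appendix, taking $\rho = n = |V_N|$. Theorem~4 asserts that for \emph{any} $\rho$-approximation algorithm $A_\rho(\pi_k)$ for weighted $k$-PVCP on bounded degree graphs, the algorithm $Wtd$-$kpath$ returns a feasible solution $ALG(G_N)$ satisfying $w(ALG(G_N)) \leq (2 - \frac{1}{\rho})\, w(OPT(G_N))$. Since the approximation guarantee in Theorem~4 is derived solely from the inequalities in the proof of Theorem~2 and holds for every $\rho > 1$ regardless of whether $\rho$ is a fixed constant, I would simply set $A_\rho(\pi_k) = A_n(\pi_k)$, so that $\rho = n$, and read off the approximation factor $(2 - \frac{1}{n})$.

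The only real content of the proof is verifying that this substitution keeps the running time polynomial, since here $\rho = n$ grows with the input rather than being constant. First I would recall the running time bound from Theorem~4, namely $O\!\left(n^{O(1)} \cdot 2^{k(\Delta+1)b} \cdot T(A_\rho(\pi_k),G_N)\right)$ with $b = |V_A|\,\Delta(\Delta-1)^{\lceil \frac{k-5}{2}\rceil}$. Under the hypotheses of the corollary, $\Delta \in O(1)$, the insertion size $c = |V_A|$ is constant, and $k$ is a fixed constant, so $b$ is a constant; consequently $2^{k(\Delta+1)b}$ is a constant factor (and trivially $(\Delta+1)b \in O(\log n)$ as required by Theorem~4). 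Next I would substitute the running time of the $n$-approximation algorithm, $T(A_n(\pi_k),G_N) = O(2^k n^{O(1)}) = O(n^{O(1)})$ for constant $k$. Multiplying these together gives an overall bound of $O(n^{O(1)})$, establishing that $Wtd$-$kpath$ equipped with $A_n(\pi_k)$ runs in polynomial time.

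I do not expect a genuine obstacle here, as the statement is a direct specialization of Theorem~4. The one subtlety worth flagging is that $\rho = n$ is input-dependent, so one must confirm that nothing in Theorem~4 implicitly assumed $\rho = O(1)$; checking the proof of Theorem~2 confirms the bound $(2 - \frac{1}{\rho})$ is purely algebraic and valid for any $\rho > 1$, so the approximation guarantee transfers verbatim. The remaining care is entirely bookkeeping: ensuring that the bounded-degree hypothesis $\Delta \in O(1)$ (together with constant $k$ and $c$) is exactly what collapses the $2^{k(\Delta+1)b}$ factor to a constant, so that composing a polynomial-time $\rho$-approximation with the polynomial overhead of $Wtd$-$kpath$ yields a polynomial-time reoptimization algorithm.
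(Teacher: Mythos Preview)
Your proposal is correct and follows exactly the paper's approach: the corollary is stated in the paper as an immediate consequence of Theorem~4 together with the $n$-approximation algorithm of Theorem~5, with no further argument given. Your additional verification that the running time remains polynomial (since $\Delta$, $k$, and $|V_A|$ are constants, so $b$ and hence $2^{k(\Delta+1)b}$ are constant, and $T(A_n(\pi_k),G_N)=O(2^k n^{O(1)})$) and that the bound $(2-\tfrac{1}{\rho})$ from Theorem~2 is valid for input-dependent $\rho$ is more explicit than what the paper provides, but entirely in the same spirit.
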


\begin{corollary}
Algorithm $Wtd$-$kpath$ is a $(2 - \frac{1}{k})$ approximation for $Reopt(\pi_k)$ under constant-size graph insertion, where $(\Delta) \in O(1)$ and $A_\rho(\pi_k)$ is $A_k(\pi_k)$.
\end{corollary}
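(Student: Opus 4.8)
The plan is to obtain this corollary as a direct specialization of Theorem 4, instantiating the generic $\rho$-approximation subroutine with the concrete $k$-approximation algorithm $A_k(\pi_k)$ furnished by Theorem 6 of Appendix B. First I would invoke Theorem 6 to guarantee a polynomial-time algorithm $A_k(\pi_k)$ for weighted $k$-$PVCP$ whose approximation ratio is exactly $\rho = k$. Line 3 of $Wtd$-$kpath$ passes this $A_\rho(\pi_k)$ to $Construct$-$Sol$, so Theorem 4 guarantees that the output has weight within a factor $\left(2 - \tfrac{1}{\rho}\right)$ of $w(OPT(G_N))$. Substituting $\rho = k$ yields the claimed factor $\left(2 - \tfrac{1}{k}\right)$, and no separate approximation analysis is needed.

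Next I would verify that the regime hypothesis of Theorem 4, namely $(\Delta+1)b \in O(\log n)$ with $b = |V_A|\,\Delta\,(\Delta-1)^{\lceil (k-5)/2 \rceil}$, is satisfied. Under the corollary's assumption $\Delta \in O(1)$, together with the constant-size insertion (so $|V_A| = c$ is constant) and $k$ fixed, the quantity $b$ is itself a constant independent of $n$. Consequently $(\Delta+1)b$ is constant and trivially lies in $O(\log n)$, while the factor $2^{k(\Delta+1)b}$ collapses to $O(1)$. Reading off the running-time bound of Theorem 4, the algorithm then runs in $O\!\left(n^{O(1)} \cdot 2^{k(\Delta+1)b} \cdot T(A_k(\pi_k), G_N)\right) = O\!\left(n^{O(1)} \cdot T(A_k(\pi_k), G_N)\right)$, which is polynomial once $A_k(\pi_k)$ is polynomial.

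I do not expect a genuine obstacle here, since the statement is an immediate instantiation of Theorem 4; the only substantive point to confirm is external to this proof, namely that the primal-dual algorithm of Appendix B truly attains ratio exactly $k$ and terminates in polynomial time. Once Theorem 6 certifies both facts, combining them with Theorem 4 gives the $\left(2 - \tfrac{1}{k}\right)$ bound and the polynomial running time with no further computation.
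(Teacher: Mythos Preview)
Your proposal is correct and matches the paper's own argument: the corollary is stated immediately after invoking Theorems~5 and~6 of the Appendix, and is obtained by plugging the $k$-approximation algorithm of Theorem~6 into Theorem~4 with $\rho = k$, exactly as you describe. The check that $(\Delta+1)b \in O(\log n)$ under $\Delta \in O(1)$, fixed $k$, and constant $|V_A|$ is the right way to confirm the running-time hypothesis, and the paper treats this as implicit.
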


Note that the algorithm only explores the vertices till level $k-1$ that is, the vertices in the set $S_{k-1}$. Thus $|\mathscr{F}|$ is at most $2^{|S_{k-1}|}$. Therefore the algorithm will also run efficiently for the scenarios where the graph $G_A$ is attached to a $'sparse'$ part of $G_O$, that is for $|S_{k-1}| \in O(\log n)$.

\begin{corollary}
Algorithm $Wtd$-$kpath$ is a $(2 - \frac{1}{\rho})$ approximation for $Reopt(\pi_k)$ under graph insertion, where $|S_{k-1}| \in O(\log n)$.
\end{corollary}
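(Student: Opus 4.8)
The plan is to reuse the correctness analysis of Algorithm $Wtd$-$kpath$ (Theorem 4) essentially verbatim and to re-examine only the running time under the weaker hypothesis $|S_{k-1}| \in O(\log n)$, which replaces the bounded-degree assumption. First I would isolate where that assumption was actually used: the approximation guarantee of $(2-\frac{1}{\rho})$ is produced entirely by feeding a \textbf{good} family $\mathscr{F}$ to $Construct$-$Sol$ and invoking Theorem 2, and Theorem 2 is degree-agnostic. Hence it suffices to show that (i) the family $\mathscr{F}$ built by $Construct$-$F$ is still good, and (ii) the whole procedure runs in polynomial time when $|S_{k-1}| \in O(\log n)$. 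The degree bound played a role only in the time analysis of Theorem 4 (through the factor $2^{k(\Delta+1)b}$), not in establishing the ratio.

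For (i) I would observe that the proof of Theorem 4 derives Property 1 and Property 2 of $\mathscr{F}$ using only the loop invariants on $X,V,L$ together with Lemma 2, and that Lemma 2 is a structural fact about arbitrary $k$-path-free graphs with no reliance on a global degree bound. Property 2 holds because every set $X \cup L$ inserted in step 1 of $Construct$-$F$ covers $P_k(G_N, V_A)$. For Property 1, the $V_A$-connected $k$-path-free subgraph carved out of $G_N$ when $OPT(G_N)$ is removed has all its vertices within BFS levels $1,\dots,k-1$ (a vertex at level $k$ would, with its tree-path back to $V_A$, form a path on $k$ vertices), and by Lemma 2 its frontier at each level has size at most $b = |V_A|\Delta(\Delta-1)^{\lceil\frac{k-5}{2}\rceil}$; so the $|V'| \leq b$ enumeration in step 5 still discovers it. Thus $\mathscr{F}$ remains good for every $\Delta$.

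For (ii), the key point — already flagged before the corollary — is that $Construct$-$F$ never recurses past $level = k-1$, so every subset $V'$ it selects is drawn from a frontier $L \subseteq S_{k-1}$. I would encode each recursion node by the accumulated carved set $V \subseteq S_{k-1}$; since each vertex of $S_{k-1}$ sits at a fixed BFS level, this encoding is injective, so the number of function calls, and therefore $|\mathscr{F}|$, is at most $2^{|S_{k-1}|}$. Under $|S_{k-1}| \in O(\log n)$ this yields $|\mathscr{F}| \leq 2^{O(\log n)} = n^{O(1)}$. Substituting $\psi = |\mathscr{F}| = n^{O(1)}$ into the running time of $Construct$-$Sol$ from Theorem 2, namely $O(|V_N|^2 \cdot \psi \cdot T(A_\rho(\pi_k), G_N))$, and adding the polynomial per-call cost of $Construct$-$F$, gives an overall polynomial running time for any polynomial-time $A_\rho$. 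Combining (i) and (ii) delivers the claimed $(2-\frac{1}{\rho})$ approximation in polynomial time.

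The main obstacle I anticipate is making $|\mathscr{F}| \leq 2^{|S_{k-1}|}$ fully rigorous. Two things must be checked simultaneously: that the $\leq b$ size restriction in step 5, whose threshold now depends on a possibly large $\Delta$, does not exclude the optimal carving (handled in (i) by Lemma 2, since that carving is itself $k$-path-free) and does not inflate the count beyond $2^{|S_{k-1}|}$ (handled because the subsets of any frontier $L \subseteq S_{k-1}$ number at most $2^{|L|} \leq 2^{|S_{k-1}|}$, independently of $b$). Once the injective encoding of recursion nodes by subsets of $S_{k-1}$ is pinned down, the remainder follows mechanically from Theorems 2 and 4, with no change to the algorithm itself.
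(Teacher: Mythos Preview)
Your proposal is correct and follows essentially the same approach as the paper: the paper's justification is the short paragraph immediately preceding the corollary, which observes that the algorithm only explores vertices in $S_{k-1}$ and hence $|\mathscr{F}| \leq 2^{|S_{k-1}|}$, so the procedure runs in polynomial time when $|S_{k-1}| \in O(\log n)$, while the approximation guarantee is inherited unchanged from Theorems~2 and~4. Your write-up simply makes these two points more explicit and carefully verifies that the size cap $|V'|\le b$ neither excludes the optimal carving (via Lemma~2) nor affects the $2^{|S_{k-1}|}$ count.
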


\section*{{\large{4}} \ Concluding Remarks}
In this paper, we have given a PTAS for reoptimization of unweighted $k$-$PVCP$ under constant number of vertex insertions. When constant-size graph is inserted to the old graph, we have presented $1.5$-approximation algorithm for reoptimization of weighted $3$-$PVCP$. Restricting our inputs to bounded degree graphs, we have presented $\frac{5}{3}$- approximation for reoptimization of weighted $4$-$PVCP$ under constant-size graph insertion. For the reasons we mentioned in Section 3.3, our technique for reoptimization of weighted $k$-$PVCP$ $(k\geq 4)$ cannot be extended to arbitrary graphs. Hence, reoptimization of weighted $k$-$PVCP$ $(k\geq 4)$ for arbitrary graphs under constant number of vertex insertions is an intriguing open problem.\\\\\\
\noindent{\textbf{\large{Acknowledgment}}}\\\\
We thank Narayanaswamy N S for enlightening discussions on the problem.
\newpage

\bibliographystyle{splncs04}
\bibliography{ReoptPVC}
\newpage
\appendix
\noindent{\textbf{\large{Appendix}}}
\section{Reductions}
\begin{lemma}\label{lem_2}
Minimum unweighted $k$-$path$ vertex cover problem on graphs with maximum degree $\Delta(G) \geq 3$ is NP-complete.
\end{lemma}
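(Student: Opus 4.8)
The plan is to establish the two standard components separately---membership in NP and NP-hardness---with essentially all of the work residing in the hardness reduction while we keep the degree bounded.

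For membership, I would observe that a candidate set $S \subseteq V(G)$ is a feasible $k$-path vertex cover if and only if $G[V-S]$ contains no path on $k$ vertices, and that this can be tested in polynomial time: either by the naive $O(n^k)$ enumeration of $k$-vertex sequences already used in Section~2, or by the color-coding routine of \cite{DBLP:conf/stoc/AlonYZ94} running in $O(2^k n^{O(1)})$ time. Since $k$ is a fixed constant, this is polynomial, so the decision version lies in NP. I would also note at the outset why $3$ is the relevant degree threshold: for $\Delta(G)\le 2$ the graph is a disjoint union of paths and cycles, on which $k$-PVCP is solvable in polynomial time, so the hardness must kick in exactly at $\Delta=3$.

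For hardness I would first dispose of the base case $k=2$: the $2$-PVCP is precisely \textsc{Vertex Cover}, which is already NP-complete on cubic graphs, so the claim holds there. For general fixed $k\ge 3$ the plan is to reduce from the $k$-PVCP on arbitrary graphs, which is NP-complete by \cite{DBLP:journals/dam/BresarKKS11}, to the bounded-degree version by means of a \emph{degree-reduction gadget}. The idea is to replace every vertex $v$ with $\deg(v)>3$ by a small constant-degree ``distributor'' subgraph---for instance a path or a balanced tree of $\deg(v)$ fresh copies---so that the edges formerly incident to $v$ are spread over the copies, each copy acquiring degree at most $3$, and then to encode ``selecting $v$'' by a forced pattern inside the gadget. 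One then argues that an optimal $k$-PVC of the transformed graph $G'$ decomposes gadget by gadget, recovering an optimal $k$-PVC of the original graph $G$ together with a fixed number of mandatory gadget vertices, so that the two optima differ only by a computable additive constant, which transfers the decision threshold.

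The delicate point---and the step I expect to be the main obstacle---is controlling the paths that the gadget introduces. Because a $k$-path is a global object, splitting a high-degree vertex into a chain or tree of copies can create brand-new paths that wander through the interior of a gadget and cross between two original edges while touching no ``real'' vertex; a careless construction makes such spurious $k$-paths coverable more cheaply than intended, so that the optimum of $G'$ no longer tracks that of $G$. I would therefore have to design the distributor so that (i) it contains no path on $k$ vertices by itself, (ii) every $k$-path of $G'$ meeting the gadget interior is either already hit by the forced mandatory vertices or corresponds to a genuine $k$-path of $G$ through the split vertex, and (iii) all new vertices have degree at most $3$. The correctness proof then splits into the two usual directions---lifting a $k$-PVC of $G$ to one of $G'$ of the predicted size, and projecting any $k$-PVC of $G'$ back down without increasing its size, using an exchange argument in the spirit of Lemma~\ref{Le_1} to replace gadget-interior choices by the forced pattern. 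Verifying condition (ii) for every residual path is the bulk of the effort; once it is in place, the additive accounting and the polynomial bound on $|V(G')|$ are routine, completing the reduction and hence the NP-completeness.
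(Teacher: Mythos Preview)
Your NP membership argument is essentially the paper's. For hardness, however, you take a considerably harder and genuinely different route, and its central step is left unfilled.

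The paper does no gadget engineering at all. It simply observes that the reduction from \textsc{Vertex Cover} to $k$-PVCP already given in \cite{DBLP:journals/dam/BresarKKS11} raises the maximum degree by exactly one, and that the correctness proof of that reduction is degree-agnostic. Starting from \textsc{Vertex Cover} on cubic graphs (NP-complete by \cite{cubicVC}) therefore yields $k$-PVCP instances with $\Delta(G')=4$, and the hardness follows in two lines by composing citations. Your base case $k=2$ is handled the same way; the difference is that for $k\ge 3$ the paper reuses an existing black-box reduction rather than building anything new.

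Your plan instead proposes a fresh degree-reduction gadget that splits each high-degree vertex into a bounded-degree distributor. You correctly isolate condition~(ii)---every $k$-path through the gadget interior must either be killed by the mandatory pattern or correspond to a real $k$-path of $G$---as the crux, but you do not construct the gadget, and it is not evident one exists with the clean additive shift you want for all $k$: a chain of copies itself contains a $k$-path once $\deg(v)\ge k$, while a tree of copies creates new internal paths between arbitrary pairs of incident edges whose coverage depends on which leaves you designate. So the proposal has its hardest lemma still open, whereas the paper sidesteps the whole issue by piggybacking on \cite{DBLP:journals/dam/BresarKKS11}. If you want to salvage your approach, the cleanest fix is precisely the paper's: rather than reducing from arbitrary $k$-PVCP and then lowering the degree, reduce from a bounded-degree source problem and check that the known reduction keeps the degree bounded.
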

\begin{proof}
$k$-$path$ vertex cover is in NP as enumerating over all paths of order $k$ would verify a $k$-$path$ vertex cover instance, where run time of verification is $O(n^k)$. We will show it is NP-hard by reducing vertex cover problem for cubic graphs to it, which is known to be NP-complete \cite{cubicVC}. Applying the same reduction given in Theorem 1 of \cite{DBLP:journals/dam/BresarKKS11}, for the input instance of a cubic graph $G$ we get the reduced graph instance $G'$. Since the proof of reduction given in Theorem 2 of \cite{DBLP:journals/dam/BresarKKS11} is independent of the $\Delta(G)$ and $\Delta(G') = \Delta(G) + 1 = 4$, hence the reduction implies NP-hardness for k-path vertex cover on bounded degree graphs too.
\end{proof}
\begin{corollary}
Minimum weighted k-path vertex cover for bounded degree graphs is NP-hard.
\end{corollary}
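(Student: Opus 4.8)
The plan is to exhibit the unweighted problem as a special case of the weighted one, so that NP-hardness transfers immediately from Lemma~\ref{lem_2}. Since that lemma establishes that minimum unweighted $k$-path vertex cover on graphs with $\Delta(G) \geq 3$ is NP-complete, and every NP-complete problem is NP-hard, it suffices to give a polynomial-time reduction from the unweighted problem (on bounded degree graphs) to the weighted one (on bounded degree graphs).

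First I would take an arbitrary instance of the unweighted $k$-$PVCP$ on a graph $G = (V, E)$ with $\Delta(G) \geq 3$, together with a target cardinality $t$. I would construct the corresponding weighted instance on the same graph $G$ by assigning the uniform weight function $w(v) = 1$ for every $v \in V$. This construction is trivially computable in polynomial time and preserves the degree bound, since the graph itself is left unchanged; in particular the reduced instance is still a bounded degree graph.

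Next I would observe that feasibility is identical in both instances, as the feasibility condition (covering every path of order $k$) does not depend on vertex weights. Moreover, under $w \equiv 1$ the weight $w(S) = \sum_{v \in S} w(v) = |S|$ of any subset equals its cardinality. Hence a subset $S$ is a minimum-weight $k$-path vertex cover in the weighted instance if and only if it is a minimum-cardinality $k$-path vertex cover in the unweighted instance, and the two optimal objective values coincide. A target-cardinality question for the unweighted problem therefore translates directly into a target-weight question for the weighted problem, so any algorithm deciding weighted $k$-$PVCP$ on bounded degree graphs decides unweighted $k$-$PVCP$ on bounded degree graphs.

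Combining these observations yields the reduction, and since the unweighted problem is NP-hard by Lemma~\ref{lem_2}, the weighted problem on bounded degree graphs is NP-hard as well. I do not expect any substantive obstacle here: the only points to verify are that the identity reduction respects the degree bound and that equal weights reduce weight minimization to cardinality minimization, both of which are immediate.
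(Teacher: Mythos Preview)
Your proposal is correct and matches the paper's intent: the paper states the result as an immediate corollary of Lemma~\ref{lem_2} with no separate proof, and the argument you spell out---assigning unit weights so that the unweighted problem becomes a special case of the weighted one on the same bounded-degree graph---is precisely the trivial reduction that makes it a corollary. There is nothing to add.
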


\begin{lemma}\label{lem_1}
Unless $P=NP$, reoptimization under vertex insertion of $k$-$path$ vertex cover problem for bounded degree graphs $(\pi)$ does not admit a polynomial time optimal algorithm.
\end{lemma}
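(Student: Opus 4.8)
The plan is to argue by contradiction, converting an exact polynomial-time reoptimization algorithm into a polynomial-time exact algorithm for the NP-complete optimization problem of Lemma~\ref{lem_2}. So suppose, toward a contradiction, that there is a polynomial-time algorithm $R$ that, on every valid triple $(G_O, G_N, OPT(G_O))$ in which $G_N$ is obtained from $G_O$ by a single vertex insertion, returns an optimal $k$-path vertex cover $OPT(G_N)$ of $G_N$.

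First I would set up an incremental construction. Given an arbitrary bounded-degree instance $G=(V,E)$ of $k$-$PVCP$ with $V=\{v_1,\dots,v_n\}$ under any fixed ordering, define $G_i = G[\{v_1,\dots,v_i\}]$ for $0 \le i \le n$, so that $G_0$ is the empty graph and $G_n = G$. Each $G_i$ arises from $G_{i-1}$ by inserting the single vertex $v_i$ together with the attachment edges joining $v_i$ to its neighbours among $v_1,\dots,v_{i-1}$, which is precisely a vertex insertion in the sense of the preliminaries. Since each $G_i$ is an induced subgraph of $G$, we have $\Delta(G_i) \le \Delta(G)$, so every intermediate graph is bounded degree and hence a legitimate instance for $R$.

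Next I would run $R$ along this chain. The base case is trivial: $G_0$ has no vertices, so $OPT(G_0)=\phi$ is known without computation. For $i = 1,\dots,n$, feeding the triple $(G_{i-1}, G_i, OPT(G_{i-1}))$ to $R$ yields $OPT(G_i)$ in polynomial time, where correctness of the input is guaranteed inductively because the previous call returned a genuine optimum. After $n$ such calls we obtain $OPT(G_n)=OPT(G)$, an exact solution to the original instance, using only $n$ invocations of $R$ plus linear bookkeeping, which is polynomial overall. This contradicts the NP-completeness established in Lemma~\ref{lem_2}, so no such $R$ can exist unless $P=NP$.

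The step I expect to require the most care is not the main idea but the modelling check: the single-vertex-insertion primitive must be rich enough to realise the whole chain. Each step inserts exactly one vertex, yet it may attach to arbitrarily many already-present vertices, so I would verify that the attachment-edge formalism of the preliminaries indeed permits this and that the bounded-degree restriction is respected at every intermediate stage. The latter is guaranteed since each $G_i$ embeds as an induced subgraph of the bounded-degree target $G$, but it is the one invariant worth stating explicitly so that every reoptimization call is fed an instance from the same problem class for which Lemma~\ref{lem_2} applies.
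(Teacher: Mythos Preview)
Your proposal is correct and follows essentially the same route as the paper: both build the target graph $G$ one vertex at a time via induced subgraphs $G_i$, use the assumed exact reoptimization oracle to propagate an optimal solution from $G_{i-1}$ to $G_i$, and conclude with a polynomial-time Turing reduction contradicting Lemma~\ref{lem_2}. Your version starts from the empty graph $G_0$ rather than a single vertex and, usefully, makes explicit the invariant $\Delta(G_i)\le\Delta(G)$ that keeps every intermediate instance in the bounded-degree class; the paper leaves this implicit.
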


\begin{proof}
Let Oracle $A$ output optimal solution for $Reopt(\pi)$ under vertex insertion for bounded degree graphs. Given oracle access to A, we design a polynomial time algorithm $B$ to obtain an optimal solution for problem $\pi$ on an arbitrary non-empty graph $G = (V,E)$. Here $|V| = n$.
 
\begin{algorithm}
\caption{$B(\pi,G)$}\label{alg:ReductB}
\begin{algorithmic}[1]
\State Let $\langle v_1,\dots v_n\rangle$ be an arbitrary sequence of $V(G)$
\State $G_1$ = $(V_1,E_1)$ = $(\{v_1\},\phi)$
\State $ALG(G_1) = \phi$
\For{ $i$ \textbf{from} $2$ \textbf{to} $n$ }
\State $G_i = \{V_i,E_i\}$
\State $E^a_i = \{ (u,v_i) \mid u \in V(G_{i-1})$ and $ (u,v_i) \in E(G)\}$
\State $G_i = (V_{i-1} \cup \{v_i\},E_{i-1} \cup E^a_i)$
\State $ALG(G_i) = A(G_{i-1},G_i,ALG(G_{i-1})) $
\EndFor\label{reoptendwhile}
\State \textbf{return} $ALG(G_{n})$
\end{algorithmic}
\end{algorithm}
We will claim by the principle of mathematical induction that $ALG(G_i)$ is the optimal solution for $G_i$, where $1 \leq i \leq n$. For the base case of $G_1=(\{v_1\},\phi)$, $\phi$ is the optimum for $G_1$. For the inductive step, assume $ALG(G_{i-1})$ is optimum for $G_{i-1}$. Since $G_i = (V_{i-1} \cup \{v_i\},E_{i-1} \cup E^a_i)$ for $v_i \notin V[G_{i-1}]$ and $E^a_i \subseteq (V[G_{i-1}] \times \{v_i\})$, the change is a valid vertex insertion. Therefore, algorithm $A(G_{i-1},G_i,ALG(G_{i-1})$ indeed outputs the optimal solution for $G_i$ as $ALG(G_i)$, thus proving the induction hypothesis. Next, we claim that $G_n = G$. Clearly $V(G_n) = V(G)$, thus we only need to show $E(G_n) = E(G)$. $\forall$ $e\in E(G_n)$, $E^a_i$ is the only set which contributes to $E(G_n)$, enforcing $e \in E(G)$. For any $(v_i,v_j) \in E(G)$ such that $v_i$ comes before $v_j$ in the vertex sequence, in the $j^{th}$ iteration of loop $(v_i,v_j) \in E^a_j$, thus $(v_i,v_j) \in E(G_n)$. Thus the algorithm $B$ outputs $OPT(G)$. All steps in the algorithm $B$ are polynomial in input size. Hence, we obtain a polynomial time Turing reduction from $\pi$ to $Reopt(\pi)$ under vertex insertion. Thus, the proof of $Lemma$ follows since $\pi$ is known to be an NP-complete problem (refer Lemma 3 in Appendix A).
\end{proof}
\section{Approximation algorithms for weighted $k$-$PVCP$}
In this section, we give the approximation algorithms for weighted $k$-$PVCP$ $(\pi_k)$ on input graph $G=(V,E)$. Here, $|V| = n$ and subroutine $getKPath(G)$ outputs a path of order k in $G$ if it is present and $\phi$ otherwise. 

\begin{algorithm}
\caption{$Approx$-$Alg(\pi_k,G)$}\label{alg:Reoptw4}
\begin{algorithmic}[1]
\State $V' = V(G)$
\State $ALG(G)=\phi$
\State $P = getKPath(G[V'])$
\While{$(P \neq \phi)$}
\State $v_m =$ minimum weight vertex in $P$
\State $ALG(G) = ALG(G) \cup  \{v_m\}$
\State $V' = V' - \{v_m\}$
\State $P = getKPath(G(V'))$
\EndWhile
\State \textbf{return} $ALG(G)$
\end{algorithmic}
\end{algorithm}

\begin{theorem}
Algorithm $Approx$-$Alg$ achieves $n$-approximation for $\pi_k$ on graph $G=(V,E)$ and runs in $O(2^k n^{O(1)})$ time, where $n$ is $|V|$.
\end{theorem}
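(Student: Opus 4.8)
The plan is to verify that the greedy output is feasible, then bound its weight by a simple per-iteration charging argument against an optimal solution, and finally read off the running time from the cost of the $getKPath$ subroutine. First I would record feasibility: the while loop exits only when $getKPath(G[V'])$ returns $\phi$, so at termination the residual induced graph $G[V']$ has no path of order $k$. Since the vertices deleted from $V'$ are exactly those added to $ALG(G)$, we have $V' = V \setminus ALG(G)$ at the end, whence $G[V \setminus ALG(G)]$ is $k$-path-free and $ALG(G)$ is a valid $k$-path vertex cover. Termination is clear because each iteration deletes one vertex, so there are at most $n$ iterations.

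Next I would prove the approximation guarantee. Fix an optimal solution $OPT$ for $\pi_k$ on $G$, and consider the path $P = getKPath(G[V'])$ chosen in some iteration. As $G[V']$ is an induced subgraph of $G$, $P$ is also a path of order $k$ in $G$, so $OPT$ must contain some vertex $u \in P$. The algorithm adds the minimum-weight vertex $v_m$ of $P$, and since all weights are positive,
$$ w(v_m) = \min_{v \in P} w(v) \le w(u) \le \sum_{x \in OPT} w(x) = w(OPT). $$
Hence each vertex inserted into $ALG(G)$ costs at most $w(OPT)$, and summing over the at-most-$n$ iterations gives $w(ALG(G)) \le |ALG(G)| \cdot w(OPT) \le n \cdot w(OPT)$, the claimed factor.

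For the running time, each iteration performs one $getKPath$ call together with $O(n^{O(1)})$ work to select the lightest vertex on $P$ and to update $V'$. Implementing $getKPath$ by the color-coding method of Alon, Yuster and Zwick detects and extracts a $k$-path in $O(2^k n^{O(1)})$ time; multiplying by the at-most-$n$ iterations and absorbing the extra factor of $n$ into $n^{O(1)}$ yields the stated $O(2^k n^{O(1)})$ bound.

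The individual steps are routine, so the only places that need care are, first, the observation that a path found inside the shrinking induced subgraph $G[V']$ is genuinely a $k$-path of the original $G$ (this is what guarantees $OPT$ intersects it), and second, the fact that charging each greedy pick to the whole quantity $w(OPT)$ is deliberately crude: it is precisely this slack, multiplied by the at-most-$n$ iterations, that produces the factor $n$. On the running-time side the sole external ingredient is the $O(2^k n^{O(1)})$ cost of color coding, whose randomized guarantee can, if desired, be derandomized within the same $n^{O(1)}$ form.
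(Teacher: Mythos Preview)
Your proof is correct and follows essentially the same approach as the paper: feasibility from the loop's termination condition, the per-iteration bound $w(v_m)\le w(OPT)$ because $OPT$ must hit every $k$-path, summing over at most $n$ iterations (the paper uses the slightly tighter $n-k+1$), and the running time from the $O(2^k n^{O(1)})$ color-coding call repeated at most $n$ times. The only cosmetic differences are that you spell out the induced-subgraph observation and the feasibility argument more explicitly than the paper does.
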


\begin{proof}
In the $i^{th}$ iteration of while loop, we find a $k$-$path$, $P_i$, by calling subroutine $getKPath$. Let the vertex with minimum weight in $P_i$ be $v_m^i$. Since, $OPT(G)$ must include at least one vertex from $P_i$, implying $w(v_{m}^{i}) \leq w(OPT(G))$, where $OPT(G)$ is an optimal solution for $\pi_k$. 
Algorithm $Approx$-$Alg$ always terminates because $|ALG(G)|$ is at most $n-k+1$. The while loop continues to run till $ALG(G)$ is a feasible $k$-$path$ vertex cover.
$$w(ALG(G)) = \sum_{n=1}^{|ALG(G)|} w(v_{m}^{i}) \leq \sum_{n=1}^{|ALG(G)|} w(OPT(G))  \leq (n-k+1)w(OPT(G))$$
Thus, Algorithm $Approx$-$Alg$ achieves $n$-approximation for the problem $\pi_k$ on $G$. The function $getKPath(G)$ uses color coding algorithm for finding a $path$ on $k$ vertices \cite{DBLP:conf/stoc/AlonYZ94} and runs in $O(2^k n^{O(1)})$. Hence, algorithm $Approx$-$Alg$ runs in time $O(2^k n^{O(1)})$.
\end{proof}

\begin{theorem}
There is a $k$-approximation algorithm for weighted $k$-$PVCP$ $(\pi_k)$ for graph $G=(V,E)$.
\end{theorem}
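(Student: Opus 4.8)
The plan is to exhibit a primal-dual algorithm for weighted $k$-$PVCP$ and charge its cost against a feasible solution to the LP dual of the natural covering formulation. First I would set up the integer program: for each $k$-path $P$ in $G$, introduce the covering constraint $\sum_{v \in P} x_v \geq 1$, with objective $\min \sum_{v} w(v) x_v$ over $x_v \in \{0,1\}$. The LP relaxation has a dual variable $y_P$ for every $k$-path $P$, with dual constraints $\sum_{P \ni v} y_P \leq w(v)$ for each vertex $v$, and the dual objective $\max \sum_P y_P$. By weak LP duality, any feasible dual assignment lower-bounds $w(OPT(G))$, so it suffices to produce a cover whose weight is at most $k$ times a feasible dual value.

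The algorithm I would write mirrors the standard primal-dual schema for set cover with sets of size $k$. Process uncovered $k$-paths one at a time: given an uncovered path $P$, raise $y_P$ until some vertex $v \in P$ becomes \emph{tight}, i.e. $\sum_{P' \ni v} y_{P'} = w(v)$; add all such newly tight vertices of $P$ to the cover and continue until no uncovered $k$-path remains (using a $getKPath$-style subroutine, as in \Cref{alg:Reoptw4}, to detect remaining paths in polynomial time via color coding). The output $S$ is a feasible $k$-path vertex cover by construction, since the loop terminates only when every $k$-path contains a selected vertex.

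For the approximation bound, the key step is the charging argument. Every vertex $v \in S$ is tight, so $w(v) = \sum_{P \ni v} y_P$, and therefore $w(S) = \sum_{v \in S} \sum_{P \ni v} y_P = \sum_{P} y_P \cdot |\{v \in S : v \in P\}|$. Since each $k$-path $P$ contains exactly $k$ vertices, the inner count is at most $k$, giving $w(S) \leq k \sum_P y_P \leq k \cdot w(OPT(G))$ by weak duality. This yields the claimed $k$-approximation.

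The main obstacle I anticipate is bounding $|\{v \in S : v \in P\}| \leq k$ cleanly: this is immediate from $|V(P)| = k$, but one must take care that raising a single $y_P$ can make several vertices of that path tight simultaneously, so the crude bound $k$ rather than $1$ per path is what the analysis must absorb — this is precisely why the factor is $k$ and not smaller. A secondary point worth stating explicitly is that only finitely many dual variables are ever raised (one per iteration, and the number of iterations is at most $n - k + 1$ since each iteration adds at least one vertex), so the dual solution is well-defined and the running time is polynomial, dominated by the $O(2^k n^{O(1)})$ cost of the path-finding subroutine.
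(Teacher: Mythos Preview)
Your proposal is correct and is essentially the same argument as the paper's: the paper observes that weighted $k$-$PVCP$ is a set cover instance whose element frequency is exactly $k$ (each $k$-path lies in exactly $k$ of the sets $v_P$) and then invokes the standard primal-dual $f$-approximation for set cover from Vazirani, whereas you unpack that primal-dual schema explicitly on the $k$-$PVCP$ LP. The content is identical; your write-up is simply more self-contained (and your use of a $getKPath$ oracle is a harmless implementation choice, since the paper's version can also just enumerate the at most $n^k$ paths).
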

\begin{proof}
Let $P_k$ be the collection of all k-paths in a graph $G$. Note that $|P_k|$ is at most $n^k$. For each vertex $v \in V$, we denote subset $v_P$ to be the collection of $k$-$paths$ containing $v$. $\pi_k$ is a special case of set cover problem where $P_k$ is the universe of elements and $ V_P = \{v_P| v \in V\}$ is family of subsets. Since each path contains exactly $k$ vertices, the frequency of the element is $k$ in the family of subsets. Thus, we have $k$-approximation algorithm from Theorem 15.3 in \cite{DBLP:books/daglib/0004338}.  
\end{proof}

\end{document}